\renewcommand*{\stackrel}{%
\mathrel\bgroup\stack@relbin
}
\newtheorem{theorem}{Theorem}[section]
\newtheorem{proposition}[theorem]{Proposition}
\newtheorem{corollary}[theorem]{Corollary}
\def\smallX{\fontsize{8.3pt}{8.3pt} \selectfont}
\begin{document}

\title{{\fontsize{17}{17}\selectfont Power Control and Soft Topology Adaptations in Multihop Cellular Networks with {Multi-Point Connectivity}}}

\author{Syed~Amaar~Ahmad and~Luiz~A.~DaSilva,~\IEEEmembership{Senior Member,~IEEE}
\thanks{Syed A. Ahmad is currently working with Datawiz Corporation. Luiz A. DaSilva is with the Department of Electrical and Computer Engineering, Virginia Tech and also with CTVR, Trinity College, Dublin, Ireland. Email: $\{$saahmad, ldasilva$\}$@vt.edu.} 
\thanks{Manuscript received xxxx; revised xxxx}}

\markboth{IEEE Transactions on Communications,~Vol.~, No.~, ~}%
{Shell \MakeLowercase{\textit{et al.}}: Bare Demo of IEEEtran.cls for Journals}

\maketitle

\begin{abstract}
The LTE standards account for the use of relays to enhance coverage near the cell edge. In a traditional topology, a mobile can either establish a direct link to the base station (BS) or a link to the relay, but not both. In this paper, we consider the benefit of {\emph{multipoint connectivity}} in allowing User Equipment (UEs) to split their transmit power over simultaneous links to the BS and the relay, in effect transmitting two parallel flows. We model decisions by the UEs as to: (i) which point of access to attach to (either a relay or a relay and the BS or only the BS); and (ii) how to allocate transmit power over these links so as to maximize their total rate. We show that this  flexibility in the selection of points of access leads to substantial network capacity increase against when nodes operate in a fixed network topology. Individual adaptations by UEs, in terms of both point of access and transmit power, are interdependent due to interference and to the possibility of over-loading of the backhaul links. We show that these decisions can converge without any explicit cooperation and derive a closed-form expression for the transmit power levels.
\end{abstract}

\begin{IEEEkeywords}
Resource allocation, hetereogenous networks, coordinated multipoint transmission, dual connectivity
\end{IEEEkeywords}

\IEEEpeerreviewmaketitle

\section{Introduction}
\IEEEPARstart{L}{TE} networks anticipate the use of low-cost relays to increase the coverage region of a base station (e{N}ode{B}). In the currently deployed LTE architecture, each User Equipment (UE) can only connect to a single access point at a given time, despite the availability in the cellular system of multiple access points such as relays and e{N}ode{B}s \cite{handoverLTE}. {In this paper, we consider a flexible two-hop uplink network where UEs have \emph{multi-point connectivity} and choose between transmitting their data via either a single or multiple access points simultaneously, with transmit power adaptations.} In a multihop cellular network, the flexibility to dynamically choose between different access points enables nodes to better overcome the resource limitations introduced by the wireless channel and the relay backhaul link. 

If not adequately balanced, the link of a relay to the eNodeB (wireless backhaul) may limit the end-to-end data rate of UEs connected to the relay, especially if there is a large number of users. To support a higher load via the relay, a network operator may allocate a larger bandwidth to the backhaul to increase the uplink's capacity. However, in LTE-Advanced, the channels used for the relay backhaul and for the UEs come from the same pool \cite{3gpp1s}\cite{LTEOpPers2012}. Hence, such a re-allocation to the relay-to-eNodeB link would also reduce the bandwidth available to the UEs and thus this approach may not lead to an increase in the network-wide performance. Therefore, in lieu of increasing the bandwidth of the relay backhaul, we instead propose to allow a UE to maintain parallel connections to a relay and to the eNodeB, potentially transmitting two independent data streams on two separate channels.
\subsection{Contributions}
We devise a link-adaptive scheme where UEs are capable of having parallel links to the base station and to the relays. Depending on the load on the relay backhaul, the UE can choose to split its data into two streams, where one is sent directly to the base station and the other one to the relay, or it can send both data streams to a single access point (i.e. either a relay or the base station). Under this scheme, each UE attempts to maximize its own  achievable data rate by appropriately selecting its access points (topology adaptation) and by performing distributed power allocation. The power allocation essentially works in two modes: (1) waterfilling or (2) the UE iteratively re-allocates more power on its link to the base station whenever there is over-loading of the backhaul between the relay and the base station. We show significant performance improvement owing to this flexibility to adapt the topology and use power allocation in response to the load on the backhaul links.
\subsection{Background}
The approach taken in our work overlaps with aspects of cellular topology control \cite{Amzallag2013}\cite{Lorenzo2009} and {parallel relay channels} \cite{ScheinGallager2000_PRC}. In a conventional cellular topology, nodes maintain a connection to a single access point, which is often the one with the highest received SNR, and aim to either maintain a target SINR with minimum power \cite{bambos2000}\cite{foschini_93} or maximize some related utility function \cite{xiao_03}. Moreover, several kinds of relaying techniques have been studied for cellular networks to improve coverage and provide better end-to-end performance \cite{Peters2013_relays}. Our work incorporates elements of {parallel relay channels} which have been studied in the literature in the context of: (i) a single source-destination node pair with transmit power allocation over the sub-channels without interference from other source nodes \cite{Rezaei2010_PRC_sched, Bakanoglu2011_PRC_mul_rel, LiangPoor2007_resAlloc_maxmin, Liang2005_resAlloc, ChenYener2008_PRC_pow}, (ii) a Gaussian interference channel \cite{EtkinTse2008_GIC} for the two-user case (i.e. two source-destination pairs) where the capacity region is explored with a relay backhaul operating on an {out-of-band} channel from the users \cite{sahin2011_icobr, yeTian2012_icobr, Sahin2011_icobr2, Razaghi2013_ICOBR} and (iii) a general multiple access channel from multiple source nodes to a single shared relay and common destination \cite{SankarPoor2011_OppScd}. 

Our approach differs from each of the above categories in the following ways. Unlike in \cite{Amzallag2013}\cite{Lorenzo2009}, nodes maintain parallel links to multiple access points. Also unlike \cite{Rezaei2010_PRC_sched, Bakanoglu2011_PRC_mul_rel, LiangPoor2007_resAlloc_maxmin, Liang2005_resAlloc, ChenYener2008_PRC_pow}, we assume there can be an arbitrary number of source nodes (i.e. UEs) in the network which may create mutual co-channel interference. The capacity bounds on interference channels with relaying are provided by \cite{sahin2011_icobr, yeTian2012_icobr, Sahin2011_icobr2, Razaghi2013_ICOBR} for certain cases (e.g. 2-user). 
In contrast, the motivation for our work is how UEs can autonomously make power allocations and select access points by taking into account network-wide conditions to maximize achievable data rate while treating interference as noise. 

In our work, UE transmissions to the same access point (e.g. base station or relay) are orthogonal in frequency, as we do not assume any underlying multi-user detection (MUD) or interference cancellation capability at the receivers. Moreover, in contrast to \cite{SankarPoor2011_OppScd,sahin2011_icobr, yeTian2012_icobr, Sahin2011_icobr2, Razaghi2013_ICOBR}, we study a generalized network where there can be any number of noisy and interference channels of unequal bandwidth and where the number of relays may be more than one. {In LTE, a UE may transmit to two access points through either a \emph{coordinated multipoint transmission} approach \cite{Lee_ComP_Nov12}\cite{3gppComP12} or a \emph{dual connectivity} approach \cite{LeanCarrier2013}. In the former, a given bandwidth resource is shared for  transmission to the different access points, whereas the latter requires two different bandwidth resources. Our \emph{multipoint connectivity} approach is a hybrid of the two: a UE has two bandwidth resources and it can transmit data to two access points.} 
\begin{figure}[!t]
        \centering\includegraphics[width=0.47\textwidth]{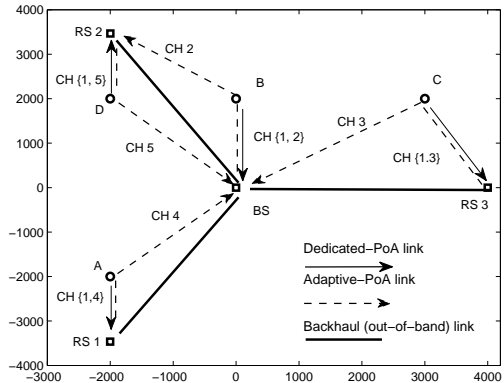}
\caption{A depiction of a network with parallel links. While all the UEs A, B, C and D transmit data on Channel 1 to dedicated-PoAs, they can send data to either the BS or to the RS on their other channel.}
\label{fig:furt1}
\end{figure}
In Section II, we present the system model, followed by the problem formulation in Section III. In Section IV, we propose the adaptation scheme for nodes to select their access points and allocate transmit power accordingly. In Section V, we show convergence of the link adaptations under certain constraints and derive a closed-form expression for the converged transmit power. In Section VI, we provide preliminary analysis of how our scheme would compare against cooperative MIMO, interference cancellation and centralized cross-layer optimization. In Section VII, we provide simulation results for our scheme. Finally, Section VIII provides the conclusions of the paper.

\section{System Model}
We consider a single-cell {uplink} network with a set ${\mathcal{N}}=\{1,2,\cdots, n\}$ of user equipment (UEs) that want to send their data to a base station (BS). Located towards the edges of the cell are $N_r$ fixed relays (RS) that provide coverage for UEs located farther away from the base station (see Fig. 1 with $N_r=3$). {All nodes are equipped with a single antenna}. All Points of Access (PoA) are denoted by a set ${\mathcal{R}}=\{1,2,\cdots,N_r, N_r+1\}$, where elements $1$ to $N_r$ indicate an RS and the BS is represented by element $N_r+1$. The RS decodes the data it receives from the UE and forwards it to the BS (i.e. a decode-and-forward scheme). 

Each UE can have at most two parallel links. A {node $i$ maintains its connection to a PoA $r_i \in {\mathcal{R}}$ on its first link, which is referred to as the \emph{dedicated-PoA} link. The PoA chosen by each UE on its dedicated-PoA link is the one with the highest received pilot signal strength on a downlink control channel. On its second link (referred to as the \emph{adaptive-PoA} link), node $i$ may choose to send data to the same PoA or to an alternative PoA. The UE can transmit either to the BS or to an RS on its adaptive-PoA link (i.e., $r_i^A \in {\mathcal{R}}$).

All links operate on a set of channels ${\mathcal{F}}=\{1,2,\cdots,F\}$. The assignment of each channel $f$ to nodes is represented as a set ${\mathcal{A}}(f) \subseteq {\mathcal{N}}$. To better manage intra-cell interference, some channels are re-used whereas some are exclusively assigned to UEs. We assume that the dedicated-PoA link of a node may experience intra-cell interference, due to frequency re-use, but its adaptive-PoA link operates on a channel which is orthogonal with respect to all other links, so that it is only noise-limited. 

All links in the network (i.e., UE-to-RS and UE-to-BS) can adapt their transmit power and, consequently, their data rate. Given the UEs' available power vector as ${\bf{P}}_{\max}$ Watts, the transmit power of nodes on their first link are represented by the vector ${\bf{P}}_d=\left[P^{(d)}_{1}, P^{(d)}_{2}, \cdots, P^{(d)}_{n}\right]$ and those on their adaptive-PoA links as ${\bf{P}}_a=\left[P^{(a)}_{1}, P^{(a)}_{2}, \cdots, P^{(a)}_{n}\right]$ where $P^{(d)}_{i}+P^{(a)}_{i}\leq P_{\max, i}$.

The channel gain between the UE node $j\in {\mathcal{A}}(f)$ and the intended receiver (PoA) of node $i \in {\mathcal{A}}(f)$ is represented as $h^{(f)}_{j, r_i}$, which depends on several factors such as shadowing, path loss and fading on channel $f\in {\mathcal{F}}$. For the adaptive-PoA link of UE $i$, we denote as $h^{(q)}_{i, r}$ and $h^{(q)}_{i, b}$ its gain to the RS and to the BS,  respectively, on channel $q \in \mathcal{F}$. As an example, consider Fig. \ref{fig:furt1}, where each node has a dedicated-PoA link on channel $1$.  {For UE $i$, given ${{x}}^{(d)}_i$ and ${{x}}^{(a)}_i$ as unit-power complex-valued input symbols sent on dedicated-PoA and adaptive-PoA links respectively, the output symbols ${{y}}^{(d)}_i$ and ${{y}}^{(a)}_i$ are:
\begin{equation} \label{effintf}
\small
\begin{split}
{{y}}^{(d)}_i &= h^{(f)}_{j, r_i}\sqrt{P^{(d)}_i}x^{(d)}_i + \sum_{\forall j\neq i} \sqrt{P^{(d)}_j}h^{(f)}_{j, r_i}x^{(d)}_j+\nu^{(d)}_i \\
&\mbox{, dedicated-PoA link}\\
{{y}}^{(a)}_i &= \sqrt{P^{(a)}_i}h^{(q)}_{i, r}x^{(a)}_i +\nu^{(r)}_i \mbox{, if adaptive-PoA link to RS}\\
{{y}}^{(a)}_i &= \sqrt{P^{(a)}_i}h^{(q)}_{i, b}x^{(a)}_i +\nu^{(b)}_i \mbox{, if adaptive-PoA link to BS}
\end{split}
\end{equation}
where $\nu^{(d)}_{i}$, $\nu^{(b)}_i$ and $\nu^{(r)}_i$ denote the zero-mean complex Gaussian noise on the dedicated-PoA and the adaptive-PoA links respectively.}

Corresponding to each channel gain, we define the power gain as $g^{(f)}_{j, r}= |h^{(f)}_{j, r}|^2$. For the dedicated-PoA links, let ${\bf{F}}$ be a normalized cross-link gain matrix of dimensions $n \times n$ 
\begin{equation} \label{gainmatrix1}
{\bf{F}}(i,j) =
\begin{cases}
0 \mbox{    if $i= j$ or $i,j$ orthogonal} \\
\frac{g^{(f)}_{j, r_i}}{g^{(f)}_{i, r_i}} \mbox{  otherwise},
\end{cases}
\end{equation}
where $i,j$ are orthogonal if their links operate on different channels in ${\mathcal{F}}$.} We also define ${\bf{D}}_d$, ${\bf{D}}_{r}$ and ${\bf{D}}_b$ as $n \times 1$ vectors, which represent normalized noise powers on the dedicated-PoA links and the adaptive-PoA links (to RS and BS) respectively. {The $i^{th}$ elements of these vectors are such that ${\bf{D}}_d(i)= n_{d}/g^{(f)}_{i, r_i}$, ${\bf{D}}_{r}(i)= n_{a}/g^{(q)}_{i, r}$ and ${\bf{D}}_b(i)= n_{a}/g^{(q)}_{i, b}$ where $n_{d} =  \mathrm{Var}[{\nu^{d}}_i]$ and $n_{a}= \mathrm{Var}[{\nu^{r}}_i]=\mathrm{Var}[{\nu^{b}}_i]$ are the thermal noise powers respectively}. 
\newcounter{tempequationcounter}
\begin{figure*}[!b]
\hrulefill
\small
\setcounter{tempequationcounter}{\value{equation}}
\begin{eqnarray}
\setcounter{equation}{5}
\begin{split}
\eta_N = \sum^{N_r}_{r=1} \min\left(\eta_r, \sum_{\forall i:r_i=r}W_d\eta\left(\gamma^{(d)}_{i}\right)+ \sum_{\forall i: r^A_i=r}\left(1-a_i\right)W_a\eta\left(\gamma^{(r)}_{i}\right)\right) +\sum_{\forall i} a_i W_a\eta\left(\gamma^{(b)}_{i}\right)+\sum_{\forall i: r_i=N_r+1} W_d\eta\left(\gamma^{(d)}_{i}\right)
\end{split}\label{maxflow}
\end{eqnarray}
\setcounter{equation}{\value{tempequationcounter}}
\end{figure*}
\newcounter{tempequationcounter2}
\begin{figure*}[!b]
\small
\setcounter{tempequationcounter2}{\value{equation}}
\setcounter{equation}{7}
\begin{equation}
\begin{split}\label{opt_func1}
f_i(a_i, P^{(d)}_{i}, P^{(a)}_{i})=\hspace*{7cm}\\
\begin{cases}
\left(1-a_i\right)\min\left((V_{r_i}+q_i)^+, W_d\eta\left(\gamma^{(d)}_{i}\right)+W_a\eta\left(\gamma^{(r)}_{i}\right) \right)+a_i \left(W_a\eta\left(\gamma^{(b)}_{i}\right)+\min \left((V_{r_i}+q_i)^+, W_d\eta\left(\gamma^{(d)}_{i}\right) \right)\right)\mbox{, $r_i\leq N_r$}\\
\left(1-a_i\right)\min\left((V_{r^A_i}+q_i)^+, W_a\eta\left(\gamma^{(r)}_{i}\right)\right) + a_iW_a\eta\left(\gamma^{(b)}_{i}\right)+W_d\eta\left(\gamma^{(d)}_{i}\right)\mbox{, $r_i= N_r+1$}
\end{cases}
\end{split}
\end{equation}
\setcounter{equation}{\value{tempequationcounter2}}
\hrulefill
\end{figure*}
\newcounter{tempequationcounter3}
The effective interference for node $i$ on its dedicated-PoA link and on its adaptive-PoA link is then defined as \cite{sung_05}:
\begin{equation} \label{effintf}
\small
E^{(d)}_{i} = \frac{n_d + \sum_{j\neq i, j \in {\mathcal{A}}(f) }g^{(f)}_{j,r_i}P^{(d)}_{j}}{g^{(f)}_{i, r_i}},
\end{equation}
\begin{displaymath}
\small
E^{(b)}_{i} = \frac{n_a}{g^{(q)}_{i, b}} 
\end{displaymath}
\begin{displaymath}
\small
E^{(r)}_{i} = \frac{n_a}{g^{(q)}_{i, r}}.
\end{displaymath}
With transmitter-side channel state information (CSIT), each UE knows these effective interferences. The corresponding SINRs are:
\begin{eqnarray}\label{sinr}
\small
\begin{split}
\gamma^{(d)}_{i} = \frac{P^{(d)}_{i}}{E^{(d)}_{i}}=\frac{P^{(d)}_i}{\frac{n_d + \sum_{j\neq i, j \in {\mathcal{A}}(f)}g^{(f)}_{j, r_i}P^{(d)}_i}{g^{(f)}_{i, r_i}}}\\ 
\end{split}
\end{eqnarray}
\begin{displaymath}
\small
\gamma^{(b)}_{i} = \frac{P^{(a)}_i}{E^{(b)}_{i}}=\frac{P^{(a)}_i}{\frac{n_a}{g^{(q)}_{i, b}}} \\
\end{displaymath}
\begin{displaymath}
\small
\gamma^{(r)}_{i} = \frac{P^{(a)}_i}{E^{(r)}_{i}}=\frac{P^{(a)}_i}{\frac{n_a}{g^{(q)}_{i, r}}}
\end{displaymath}
where either $\gamma^{(b)}_{i}$ or $\gamma^{(r)}_{i}$ will be the achieved SINR depending on the choice of PoA by the UE on its adaptive-PoA link, and $\gamma^{(d)}_i$ will be the SINR on the dedicated-PoA link. Note that for a single-cell system, the dedicated-PoA links are interference-limited while the adaptive-PoA links are noise-limited. 

{The total spectrum resource for the UEs in ${\mathcal{F}}$ is fixed at $W_s$ Hz. From this pool, the channel bandwidth $W_d$ Hz of each dedicated-PoA link and that of each adaptive-PoA link $W_a$ Hz depend on the total number of UEs; when $n$ is large, the channel bandwidths $W_d$ and $W_a$ are smaller.} We define the achievable rate on either link as $W\cdot\eta(\gamma)=W\log_2(1+\gamma)$ where $W\in\{W_d, W_a\}$. 

Finally, the RS-to-BS links operate on \emph{out-of-band} channels from the UEs, and are assumed to have a capacity $\eta_r$ bps.  
\section{Problem Formulation}
If UE $i$ has a dedicated-PoA link to an RS then it has the choice of (a) either transmitting both its data streams to the same RS or (b) sending one stream to the RS and the other one to the BS. Conversely, if UE $i$ has a dedicated-PoA link to the BS, then it can also send one stream to an RS and the other one to the BS or (c) it can transmit both its data streams to the BS. Henceforth, we denote (a),(b) and (c) as RS-RS, BS-RS and BS-BS transmissions, respectively. 

\begin{table}[t]\caption{Symbol definitions: Subscript $i$ denotes UE $i$.}
\begin{center}
\begin{tabular}{r c p{0.01cm} }
\toprule
$a_{i}$& adaptive-PoA link to the BS\\
$P^{(d)}_{i}$& dedicated-PoA link transmit power\\
$P^{(a)}_{i}$& adaptive-PoA transmit power\\
$E^{(d)}_{i}$& effective interference dedicated-PoA link\\
$E^{(b)}_{i}$& effective interference adaptive-PoA (BS)\\
$E^{(r)}_{i}$& effective interference adaptive-PoA (RS)\\
$\gamma^{(d)}_{i}$& SINR dedicated-PoA link\\
$\gamma^{(b)}_{i}$& SNR adaptive-PoA link to BS\\
$\gamma^{(r)}_{i}$& SNR adaptive-PoA link to RS\\
$\tau^{(br)}_{i}$& peak data rate BS-RS transmission\\
$\tau^{(rr)}_{i}$& peak data rate RS-RS transmission\\
$\tau^{(bb)}_{i}$& peak data rate BS-BS transmission\\
$V_r$&  rate differential at RS $r$\\
$\widehat{V}$ & rate differential threshold\\
$r_i$& dedicated-PoA link receiver\\
$\eta(\gamma)$& $\log_2(1+\gamma)$\\
$\eta_r$& relay backhaul capacity\\
$q_i$& Data rate sent via relay\\
$W_d$ & Channel bandwidth dedicated-PoA link\\
$W_a$ & Channel bandwidth adaptive-PoA link\\
\bottomrule
\end{tabular}
\end{center}
\label{tab:TableOfNotationForMyResearch}
\end{table}
We next define a Boolean variable $a_i \in \{0,1\}$ for a UE $i$. If UE $i$ transmits to the BS on the adaptive-PoA link, then $a_i=1$; otherwise, if the PoA is a relay, $a_i=0$. Using the Max-Flow Min-Cut Theorem \cite{sherali}, the aggregate end-to-end data rate (network capacity), denoted as $\eta_N$, is formally defined in equation (\ref{maxflow})\addtocounter{equation}{1}. This expression represents the achievable sum rate in the network when interference is treated as noise. Next, we define the \emph{rate differential} across the RS $r \in \{1,2,\cdots N_r\}$ as 
\begin{equation}\label{Vr}
{\smallX
\begin{split}
V_r = \eta_r-\sum_{\forall i:r_i=r} W_d\eta\left(\gamma^{(d)}_{i}\right)- \sum_{\forall i:r^A_i=r} \left(1-a_i\right)W_a\eta\left(\gamma^{(r)}_{i}\right)
\end{split}}
\end{equation}
which denotes the difference between the outgoing RS-to-BS link capacity and the aggregate incoming data rate of associated UEs at relay $r$. When $V_r < 0$, the RS-to-BS link represents a \emph{bottleneck} link that limits the end-to-end data rate for the UEs sending data to the BS via RS $r$. Conversely, $V_r \geq 0$ indicates that the capacity of the RS-to-BS link is high enough to support the data rate load on the relay. Note that the UE-to-BS links (i.e. when nodes transmit directly to the BS) are also deemed to be bottleneck links. The aggregate end-to-end data rate $\eta_N$ in equation (\ref{maxflow}) is the sum of the data rates of all bottleneck links in the network.

When UEs adapt transmit power or change PoAs, the data rates at which they can transmit to their RS will vary, thereby causing a change in the rate differential. When $V_r \geq 0$, there is room for UE $i$ to send at a higher data rate to its RS. Conversely, when $V_r < 0$, the UE cannot increase its end-to-end rate by forwarding its data exclusively through the relay. In this case, it may  choose to transmit some of its data to the BS directly and allocate its transmit power accordingly. 
We denote the data rate sent to the relay associated with a UE $i$ as 
\setcounter{equation}{6}
\begin{equation}
q_i = 
\begin{cases}
W_d\eta(\gamma^{(d)}_i)+(1-a_i)W_a\eta(\gamma^{(r)}_i) \mbox{, $r_i\leq N_r$}\\
a_iW_a\eta(\gamma^{(r)}_i) \mbox{,  $r_i= N_r+1$.}
\end{cases}
\end{equation}
By adjusting the rate differential for $q_i$, we denote $(V_r+q_i)^+$ as the available backhaul capacity at relay $r$ that UE $i$ can achieve given the current load caused by other nodes. We therefore set the adaptation objective of each node $i$ as given by (\ref{opt_func1})\addtocounter{equation}{1}, which reduces in the following manner:
\newcounter{tempequationcounterPeakDATA}
\begin{figure*}[!b]
\normalsize
\hrulefill
\setcounter{tempequationcounterPeakDATA}{\value{equation}}
\setcounter{equation}{9}
\begin{equation}{\small
\begin{split}\label{peakData}
\tau^{(rr)}_i=\underbrace{\max}_{P^{(d)}_{i}, P^{(a)}_{i}} W_d\log_2\left(1+\frac{P^{(d)}_{i}}{E^{(d)}_{i}}\right)+W_a\log_2\left(1+\frac{P^{(a)}_{i}}{E^{(r)}_{i}}\right) \mbox{, RS-RS transmission $r_i\leq N_r$} \hspace*{6.3cm} \\
\tau^{(bb)}_i=\underbrace{\max}_{P^{(d)}_{i}, P^{(a)}_{i}} W_d\log_2\left(1+\frac{P^{(d)}_{i}}{E^{(d)}_{i}}\right)+W_a\log_2\left(1+\frac{P^{(a)}_{i}}{E^{(b)}_{i}}\right)\mbox{, BS-BS transmission $r_i= N_r+1$}   \hspace*{5.7cm} \\
\tau^{(br)}_i=\underbrace{\max}_{P^{(d)}_{i}, P^{(a)}_{i}}  W_d\log_2\left(1+\frac{P^{(d)}_{i}}{E^{(d)}_{i}}\right)+W_a\log_2\left(1+\frac{P^{(a)}_{i}}{E^{(x)}_{i}}\right) \mbox{, BS-RS transmission $x\in \{b, r\}$}\hspace*{6cm} \\
\mbox{$x=b$ if $r_i\leq N_r$, $x=r$ if $r_i=N_r+1$} \hspace*{10cm}
\end{split}}
\end{equation}
\setcounter{equation}{\value{tempequationcounterPeakDATA}}
\hrulefill
\end{figure*}
\addtocounter{equation}{1}
\subsubsection{$r_i \leq N_r$}
UE $i$ has a dedicated-PoA link to an RS $r_i$ with rate differential $V_{r_i}$. In an RS-RS transmission, the UE transmits on both links to RS $r_i$ only. We thus have $a_i=0$ and (\ref{opt_func1}) equals $\min\bigg((V_{r_i}+q_i)^+, W_d\log_2\left(1+\frac{P^{(d)}_{i}}{E^{(d)}_{i}}\right)+$$W_a\log_2\left(1+\frac{P^{(a)}_{i}}{E^{(r)}_{i}}\right)\bigg)$. Conversely, when node $i$ transmits to the BS on the adaptive-PoA link (i.e. BS-RS transmission), we have $a_i=1$ and (\ref{opt_func1}) becomes $W_a\log_2\left(1+\frac{P^{(a)}_{i}}{E^{(b)}_{i}}\right) + \min\left((V_{r_i}+q_i)^+, W_d\log_2\left(1+\frac{P^{(d)}_{i}}{E^{(d)}_{i}}\right) \right)$. 
\subsubsection{$r_i =N_r+1$}
UE $i$ has a dedicated-PoA link to the BS. When the UE transmits on both links to the BS (i.e. BS-BS transmission), we have $a_i=1$ and (\ref{opt_func1}) equals $W_d\log_2\left(1+\frac{P^{(d)}_{i}}{E^{(d)}_{i}}\right) + W_a\log_2\left(1+\frac{P^{(a)}_{i}}{E^{(b)}_{i}}\right)$. Conversely, when the UE transmits to an RS on its adaptive-PoA link, we have $a_i=0$. This RS is one from which the UE receives the highest pilot signals. Thus, in a BS-RS transmission, equation (\ref{opt_func1}) becomes 
$\min\left((V_{r^A_i}+q_i)^+, W_a\log_2\left(1+\frac{P^{(a)}_{i}}{E^{(r)}_{i}}\right) \right)+$\\
$W_d\log_2\left(1+\frac{P^{(d)}_{i}}{E^{(d)}_{i}}\right)$, where $V_{r^A_i}$ is the rate differential at the RS.

We assume that the relays can broadcast the current values of the rate differentials on a downlink control channel, as in \cite{Ahmad2012}; the UEs then use the rate differential information to make adaptations. Each UE's decision can then be expressed as the following optimization problem: 
\setcounter{equation}{8}
\begin{equation}
\begin{split}\label{opt_func2}
\underbrace{\max}_{a_i, P^{(d)}_{i}, P^{(a)}_{i}} f_i(a_i, P^{(d)}_{i}, P^{(a)}_{i}) \hspace*{2cm} \\
P^{(d)}_{i}+P^{(a)}_{i} \leq {P_{\max,i}} \hspace*{2cm}  \\
P^{(d)}_{i},P^{(a)}_{i} \geq 0 \hspace*{2.7cm} \\
a_i \in \{0,1\} \hspace*{2.1cm}\\
\end{split}
\end{equation}
The total transmit power of node $i$ is constrained by $P_{\max,i}$ over both of its links. {We next define expressions for the transmit power allocations that result in the peak (maximum) data rates in \eqref{peakData} for any transmission configuration: BS-RS (base station and relay), RS-RS (relay only) or BS-BS (base station only).} 
\begin{figure*}[!b]
\hrulefill
\end{figure*}
\newcounter{tempequationcounter5}
\begin{figure*}[!b]
\normalsize
\setcounter{tempequationcounter5}{\value{equation}}
\begin{equation*}
{P^{(d)}_i(k+1)=}\setcounter{equation}{11}
\end{equation*}
\addtocounter{equation}{0}
\begin{subequations}
\small{
\label{power_alloc}
\begin{empheq}[left={}\empheqlbrace]{align}
&\min\left(P_{\max,i}, \left(\frac{W_dP_{\max,i}-W_aE^{(d)}_i(k)+W_dE^{(r)}_i(k)}{W_a+W_d}\right)^{+}\right) \mbox{, if $a_i(k+1)=0$}\label{pa}\\
&\min\left(P_{\max,i}, \left(\frac{W_dP_{\max,i}-W_aE^{(d)}_i(k)+W_dE^{(b)}_i(k)}{W_a+W_d}\right)^{+}\right) \mbox{, if $a_i(k+1)=1$ and ${V_{r_i}(k)}\geq 0$}\label{pb} \\
&P^{(d)}_{i}(k) \mbox{, if $a_i(k+1)=1$ and ${-\widehat{V}\leq V_{r_i}(k)}< 0$} \label{pc}\\
&z\cdot P^{(d)}_{i}(k) \mbox{, if $a_i(k+1)=1$ and ${V_{r_i}(k)}< -\widehat{V}$} \label{pd}
\end{empheq}}
\end{subequations}
\hrulefill
\setcounter{equation}{\value{tempequationcounter5}}
\end{figure*}
\addtocounter{equation}{1}
\begin{proposition}
Given $E^{(d)}_{i}$, $E^{(b)}_{i}$ and $ E^{(r)}_{i}$, the peak data rate $\tau^{(rr)}_i$, $\tau^{(br)}_i$ or $\tau^{(bb)}_i$ for UE $i$ is achieved with a water-filling power allocation such that $P^{(d)}_{i} = \min\left(P_{\max,i},\left(\frac{W_d P_{\max,i}+W_dE^{(x)}_{i}-W_aE^{(d)}_{i}}{W_a+W_d}\right)^{+}\right)$ and $P^{(a)}_i=P_{\max,i}-P^{(d)}_i$ where $x \in \{b, r\}$ denotes the adaptive-PoA link to either the BS or RS.\label{a1}
\end{proposition}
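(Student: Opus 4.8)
The plan is to treat all three peak-rate problems in \eqref{peakData} uniformly, since they share the identical form $\max\,W_d\log_2(1+P^{(d)}_i/E^{(d)}_i)+W_a\log_2(1+P^{(a)}_i/E^{(x)}_i)$ with $E^{(x)}_i \in \{E^{(r)}_i, E^{(b)}_i\}$, subject to $P^{(d)}_i+P^{(a)}_i \leq P_{\max,i}$ and nonnegativity. First I would observe that each logarithmic term is strictly increasing in its own power, so leaving any of the budget unused is never optimal; hence the sum-power constraint is active at any maximizer, $P^{(d)}_i+P^{(a)}_i = P_{\max,i}$. Substituting $P^{(a)}_i = P_{\max,i}-P^{(d)}_i$ collapses the problem to a single-variable maximization of $g(P^{(d)}_i)=W_d\log_2(1+P^{(d)}_i/E^{(d)}_i)+W_a\log_2(1+(P_{\max,i}-P^{(d)}_i)/E^{(x)}_i)$ over the compact interval $P^{(d)}_i\in[0,P_{\max,i}]$.

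Next I would establish that $g$ is strictly concave, as it is a sum of two strictly concave logarithms composed with affine maps of $P^{(d)}_i$, so its second derivative is strictly negative throughout the interval. Concavity guarantees a unique maximizer and lets me characterize it via the first-order condition. Differentiating and setting $g'(P^{(d)}_i)=0$ yields $W_d/(E^{(d)}_i+P^{(d)}_i)=W_a/(E^{(x)}_i+P_{\max,i}-P^{(d)}_i)$; cross-multiplying and solving the resulting linear equation produces the interior stationary point $\widetilde{P}=(W_dP_{\max,i}+W_dE^{(x)}_i-W_aE^{(d)}_i)/(W_a+W_d)$, which is precisely the claimed water-filling level.

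Finally, to recover the stated closed form I would invoke the standard fact that, for a concave function on a closed interval, the constrained maximizer is the Euclidean projection of the unconstrained stationary point onto that interval. When $\widetilde{P}<0$ the derivative is negative throughout $[0,P_{\max,i}]$, so the optimum sits at $P^{(d)}_i=0$; when $\widetilde{P}>P_{\max,i}$ the derivative is positive throughout, so the optimum sits at $P^{(d)}_i=P_{\max,i}$; otherwise the interior point is feasible and optimal. These three cases are exactly what $P^{(d)}_i=\min(P_{\max,i},(\widetilde{P})^+)$ encodes, and $P^{(a)}_i=P_{\max,i}-P^{(d)}_i$ follows from the active constraint. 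I do not anticipate a genuine obstacle, since this is a textbook two-channel water-filling derivation; the only point requiring care is the clean justification of the boundary clipping, which I would handle either through the projection argument above or, equivalently, by writing out the KKT stationarity and complementary-slackness conditions for the two inequality constraints $P^{(d)}_i\geq 0$ and $P^{(d)}_i\leq P_{\max,i}$.
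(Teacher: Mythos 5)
Your proposal is correct and follows essentially the same route as the paper's appendix proof: both reduce to the standard two-channel water-filling problem and obtain the interior stationary point from the first-order condition, then clip to $[0,P_{\max,i}]$. Your single-variable substitution with the concave-projection argument is a slightly cleaner packaging of the boundary cases than the paper's explicit KKT multiplier case analysis, but the underlying argument is the same.
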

The proof of this is in the Appendix. Note that the above may also represent the power allocation strategy of a UE when it does not have any information about the rate differential and maximizes its data rate on its two links.
\begin{corollary}\label{COR1}
The peak data rate $\tau^{(br)}_i > \tau^{(rr)}_i$ if $E^{(b)}_{i}< E^{(r)}_{i}$, and $\tau^{(br)}_i > \tau^{(bb)}_i$
if $E^{(r)}_{i}< E^{(b)}_{i}$.
\end{corollary}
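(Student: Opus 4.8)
The plan is to notice that the three peak rates in \eqref{peakData} are the same concave program solved with a different value of the adaptive-PoA effective interference, and then to show that the optimal value of this program is strictly decreasing in that interference. Concretely, I would define
\begin{equation*}
R(E)=\max_{\substack{P^{(d)}+P^{(a)}\le P_{\max,i}\\ P^{(d)},P^{(a)}\ge 0}}\ W_d\log_2\!\left(1+\frac{P^{(d)}}{E^{(d)}_{i}}\right)+W_a\log_2\!\left(1+\frac{P^{(a)}}{E}\right).
\end{equation*}
In the comparison for $r_i\le N_r$ the dedicated-PoA term (which carries $E^{(d)}_i$) is common to both RS-RS and BS-RS, so $\tau^{(rr)}_i=R(E^{(r)}_i)$ and $\tau^{(br)}_i=R(E^{(b)}_i)$; likewise for $r_i=N_r+1$ we have $\tau^{(bb)}_i=R(E^{(b)}_i)$ and $\tau^{(br)}_i=R(E^{(r)}_i)$. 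The corollary then reduces to the single monotonicity claim $E_1<E_2\Rightarrow R(E_1)>R(E_2)$.

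For this monotonicity I would use a suboptimality (plug-in) argument, which is clean because the feasible set does not depend on $E$. Let $E_1<E_2$ and let $(\bar P^{(d)},\bar P^{(a)})$ attain $R(E_2)$. Using this same feasible allocation in the program for $E_1$ gives $R(E_1)\ge W_d\log_2(1+\bar P^{(d)}/E^{(d)}_i)+W_a\log_2(1+\bar P^{(a)}/E_1)$; since $E_1<E_2$ the second term is at least its $E_2$ counterpart, with \emph{strict} inequality whenever $\bar P^{(a)}>0$, while the first term is unchanged. Hence $R(E_1)\ge R(E_2)$, and $R(E_1)>R(E_2)$ as soon as the optimal allocation for the lower-interference configuration places positive power on the adaptive-PoA link.

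The only real obstacle is certifying this strictness, i.e. that $\bar P^{(a)}>0$. Here I would invoke Proposition~\ref{a1}, whose water-filling optimizer has $P^{(a)}_i=P_{\max,i}-P^{(d)}_i$ with $P^{(d)}_i=\min\big(P_{\max,i},(\cdots)^+\big)$; this $P^{(a)}_i$ is positive exactly when $E<\tfrac{W_a}{W_d}(P_{\max,i}+E^{(d)}_i)$, which is the natural operating regime since the adaptive-PoA link is noise-limited and $E^{(b)}_i,E^{(r)}_i$ are small. Equivalently, substituting the closed form of Proposition~\ref{a1} yields $R(E)=(W_a+W_d)\log_2\!\big(E^{(d)}_i+P_{\max,i}+E\big)-W_a\log_2 E+\text{const}$ on this regime, whose derivative is proportional to $W_dE-W_a(E^{(d)}_i+P_{\max,i})$ and is strictly negative precisely when $P^{(a)}>0$; outside that regime all power stays on the dedicated link and $R$ is flat. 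Specializing $E_1<E_2$ to the two cases then gives $\tau^{(br)}_i>\tau^{(rr)}_i$ when $E^{(b)}_i<E^{(r)}_i$ and $\tau^{(br)}_i>\tau^{(bb)}_i$ when $E^{(r)}_i<E^{(b)}_i$, which is the claim.
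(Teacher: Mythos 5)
Your proof is correct and follows the route the paper intends: Corollary~\ref{COR1} is stated without its own proof as an immediate consequence of Proposition~\ref{a1}, and your observation that all three peak rates are the same concave program evaluated at different adaptive-PoA effective interferences, combined with the plug-in monotonicity argument, is exactly the reasoning being left implicit (your case mapping $\tau^{(rr)}_i=R(E^{(r)}_i)$, $\tau^{(br)}_i=R(E^{(b)}_i)$ for $r_i\le N_r$ and the reversed roles for $r_i=N_r+1$ matches \eqref{peakData}). Your treatment of strictness is a genuine refinement the paper glosses over: as you note, when $E\ge \tfrac{W_a}{W_d}\bigl(P_{\max,i}+E^{(d)}_i\bigr)$ for both interference values the water-filling optimum puts zero power on the adaptive-PoA link and $R$ is flat, so the inequalities in the corollary are strict only in the regime where the adaptive-PoA link actually carries power --- a mild caveat to the statement as written, but the natural operating regime for a noise-limited adaptive-PoA link.
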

The peak data rate achievable for UE $i: r_i\leq N_r$ through a BS-RS transmission is higher than that via RS-RS transmission if the effective interference to the BS is lower than that to the RS on the adaptive-PoA link. The same result applies for UE $i: r_i= N_r+1$ in relation to the peak data rate via BS-BS transmission.
\section{Network state-based Distributed Transmission (NDT)}
We assume that adaptations occur in time intervals denoted as $k \in \{1,2,..\}$. In interval $k$, the effective interferences $E^{(b)}_{i}(k)$, $E^{(d)}_{i}(k)$, $E^{(r)}_{i}(k)$ and the rate differential $V_r(k)$ at its associated relay are available to each UE $i$. 

{If the UEs were to apply a greedy algorithm to maximize \eqref{opt_func2}, each UE would make a locally optimal choice of transmit power levels on its two links and PoA selection. Due to the inter-dependence of interference and load on backhaul links, use of a greedy strategy by all UEs may not generally produce a globally optimal solution or converge}.

Instead we propose an alternative algorithm, which we call Network state-based Distributed Transmission (NDT), that approaches the objective in \eqref{opt_func2} and which outperforms the greedy algorithm. A positive-valued constant $\widehat{V}$, known to all UEs, is a rate differential threshold that represents some tolerable load at each relay. 
Under our strategy, a node selects its receiver on the adaptive-PoA link by comparing the available backhaul capacity against the peak data rates in \eqref{peakData} and the threshold $\widehat{V}$. This approach ensures that the achieved data rates do not become erratic as the UE autonomously adapt.

\subsection{UE $i: r_i\leq N_r$} We propose that UE $i$, which has a dedicated-PoA link to an RS, update its adaptive PoA as follows: 
\begin{displaymath}
{a_i}(k+1){=}
\end{displaymath}
\begin{subequations}
\label{aplus}
\small{
\begin{empheq}[left={}\empheqlbrace]{align}
  & 0, \parbox{15em}{ if ${V_{r_i}(k)}+q_i(k)\geq\tau^{(br)}_i(k)$ \\   and $\tau^{(rr)}_i(k)>\tau^{(br)}_i(k)$}\label{localadapta}\\[0.1cm]
  & 1, \parbox{15em}{ if $\tau^{(rr)}_i(k)\leq\tau^{(br)}_i(k)$ \\   or ${V_{r_i}(k)}+q_i(k)\leq\tau^{(br)}_i(k)-\widehat{V}$ }\label{localadaptb} \\[0.1cm]
  & a_i(k) \mbox{, otherwise.} \label{localadaptc}
\end{empheq}}
\end{subequations}
When ${V_{r_i}(k)}$ is sufficiently high, the choice of PoA between (\ref{localadapta}) and (\ref{localadaptb}) reduces to which transmission (RS-BS or RS-RS) can achieve a higher peak data rate. In \eqref{localadapta}, the UE choses an RS-RS transmission if the current backhaul capacity is large enough and if the achievable rate is more than that via a BS-RS transmission.

In (\ref{localadaptb}), the UE chooses a BS-RS transmission whenever $\tau^{(rr)}_i(k)\leq \tau^{(br)}_i(k)$ or if the current backhaul capacity is low. The load thresholds of each node are chosen such that the relay backhaul link remains a bottleneck link even after accounting for its own data rate. For all intermediate cases, in (\ref{localadaptc}), the PoA on the adaptive-PoA link remains unchanged. 

Corresponding to PoA choice in iteration $k+1$ in \eqref{aplus}, UE $i$ allocates transmit power to the dedicated-PoA link as shown in \eqref{power_alloc}. The first two cases \eqref{pa} and \eqref{pb} correspond to the waterfilling allocation for RS-RS and BS-RS transmissions, respectively, as indicated in Proposition~\ref{a1}. Given $z: 0<z<1$, \eqref{pd} is the power allocation on the dedicated-PoA (UE-to-RS) link when the relay backhaul link is overloaded. The node iteratively reduces its transmit power on its UE-to-RS link and re-allocates this power to its link to the BS. This continues until the backhaul load drops to an acceptable level such that $-\widehat{V}\leq V_{r_i}(k)<0$. As per \eqref{pc} then the UE can maintain its transmit power. Given its transmit power on the dedicated-PoA link, the remaining power is allocated by node $i$ to the adaptive-PoA link such that:
\addtocounter{equation}{1}
\begin{align}
\begin{split}\label{localadapt3}
P^{(a)}_{i}(k+1)&=P_{\max,i} - P^{(d)}_{i}(k+1) \\
\end{split}
\end{align}
\subsection{UE $i: r_i= N_r+1$} Next consider a UE $i$ that has a dedicated-PoA link to the BS. As before, its decision to select the RS on its adaptive-PoA link depends on whether the peak data rate via BS-RS transmission is higher than that via BS-BS transmission and on whether there is sufficient backhaul capacity (i.e. ${V_{r^A_i}(k)}+q_i(k)\geq -\widehat{V}$): 
\begin{displaymath}
{a_i}(k+1){=}
\end{displaymath}
\begin{subequations}
\label{aplus2}
\begin{empheq}[left={{}}\empheqlbrace]{align}
  & 0, \parbox{15em}{if ${V_{r^A_i}(k)}+q_i(k)> \widehat{V}$ and \\ $\tau^{(br)}_i(k)> \tau^{(bb)}_i(k)$}\label{localadapta2}\\[0.1cm]
  & 1,\parbox{15em}{if $\tau^{(bb)}_i(k)\leq\tau^{(br)}_i(k)$ or \\ ${V_{r^A_i}(k)}+q_i(k)\leq-\widehat{V}$}\label{localadaptb2} \\[0.1cm]
  & a_i(k) \mbox{, otherwise.} \label{localadaptc2}
\end{empheq}
\end{subequations}
If $a_i(k+1)=1$, then node $i$ performs BS-BS transmission with the corresponding transmit power allocation on the dedicated-PoA as \eqref{pb}. Conversely, if $a_i(k+1)=0$ then the node performs BS-RS transmission and allocates its power as per \eqref{pa}. As before, the remaining power is allocated to the adaptive-PoA link as per \eqref{localadapt3}. 

The key feature of the NDT algorithm is that it gives UEs the flexibility to switch between different transmission modes (i.e. RS-RS versus BS-RS transmission) based on both link-layer and backhaul load conditions. In contrast, in a fixed single connection (RS-RS and BS-BS) or dual connection (BS-RS) topology, the UEs cannot adapt to traffic load and link layer conditions. Moreover, when the backhaul links are over-loaded, conventional waterfilling on the links is not an optimal allocation. Under NDT the UEs re-allocate more transmit power (and data rate) on the direct link to the BS until the backhaul link is load-balanced. 

\section{Convergence} 
We analyze the conditions under which the distributed adaptations performed by the nodes under our NDT mechanism will converge to a solution of \eqref{opt_func2}. We define matrix ${\bf{A}}(k) = diag{({[a_1(k), a_2(k), \cdots, a_n(k)]})}$ which is an $n \times n$ diagonal matrix  where element $a_i(k)$, as defined earlier, represents whether node $i$ is connected to the BS or to an RS on its adaptive-PoA link. We let $\overline{\bf{A}}={\bf{I}}-{\bf{A}}$ where ${\bf{I}}$ is the identity matrix. For example, ${\bf{A}}(0)={\bf{I}}$ implies that all UEs initially transmit to the BS on their adaptive-PoA links. The vector representation of the effective interference in (\ref{effintf}) in iteration $k$ is then
\begin{equation} \label{effintfvec}
\begin{split}
{\bf{E}}_d(k) &= {\bf{D}}_d + {\bf{F}}{\bf{P}}_d(k) \mbox{, (dedicated-PoA links)},\\
{\bf{E}}_{a}(k) &= {\bf{A}}{\bf{D}}_{b} + \overline{\bf{A}}{\bf{D}}_{r} \mbox{, (adaptive-PoA links)}
\end{split}
\end{equation}
where ${\bf{E}}_{a}(k)(i)={\bf{D}}_b(i)$ when $a_i(k)=1$ and ${\bf{E}}_{a}(k)(i)={\bf{D}}_{r}(i)$ when $a_i(k)=0$. 
\subsection{High $\eta_r$ regime}
Note that under NDT when the capacity of the RS-to-BS backhaul links is sufficiently high, the choice of PoA of UEs depends only on the peak data rates achievable through the BS-RS, RS-RS or BS-BS transmissions. As per Corollary~\ref{COR1}, this in turn depends on the effective interferences on their adaptive-PoA links. 
\begin{corollary}
Given a large enough RS-to-BS link capacity $\eta_r$, $a_i=1$ when ${\bf{D}}_{b}(i)<{\bf{D}}_{r}(i)$ for each UE $i$ and conversely $a_i=0$ when ${\bf{D}}_{b}(i)\geq {\bf{D}}_{r}(i)$, regardless of the initial PoA assignments. \label{COR2}
\end{corollary}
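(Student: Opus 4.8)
The plan is to show that once $\eta_r$ is large, the backhaul-load terms in the adaptation rules \eqref{aplus} and \eqref{aplus2} stop being binding, so that each UE's choice of $a_i$ collapses to a pure comparison of the peak data rates of \eqref{peakData}, which Corollary~\ref{COR1} resolves directly in terms of the adaptive-link noise powers. First I would record that, by the system model, the adaptive-PoA links are noise-limited, so $E^{(b)}_i = {\bf{D}}_{b}(i)$ and $E^{(r)}_i = {\bf{D}}_{r}(i)$ are fixed constants that do not depend on the transmit powers or on the iteration index $k$; only $E^{(d)}_i$ (hence the common dedicated-link term in each $\tau$) varies with $k$ through the interference coupling.

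The main step is to bound the rate differential uniformly from below. From its definition \eqref{Vr}, $V_r(k) = \eta_r - (\text{aggregate incoming rate at RS } r)$. Each incoming term $W_d\eta(\gamma^{(d)}_j)$ or $W_a\eta(\gamma^{(r)}_j)$ is bounded above by an $\eta_r$-independent constant, because the SINRs are bounded ($P^{(d)}_j,P^{(a)}_j \le P_{\max,j}$, interference is nonnegative, and the noise powers are fixed), the bandwidths $W_d,W_a$ are fixed, and at most $n$ UEs can be associated with any one relay. Hence there is a constant $R_{\max}$, independent of $\eta_r$, with aggregate incoming rate $\le R_{\max}$, so $V_r(k) \ge \eta_r - R_{\max}$ for every iteration $k$ and every relay $r$. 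Likewise $q_i(k)\ge 0$ and every $\tau^{(br)}_i(k)$ is bounded above by an $\eta_r$-independent constant $\tau_{\max}$. Choosing $\eta_r > R_{\max} + \tau_{\max} + \widehat{V}$ therefore forces, for all $k$ and all $i$, both $V_{r_i}(k)+q_i(k) \ge \tau^{(br)}_i(k)$ and $V_{r^A_i}(k)+q_i(k) > \widehat{V}$, while making the low-backhaul clauses ($\,\le \tau^{(br)}_i(k)-\widehat{V}$ in \eqref{aplus}, $\le -\widehat{V}$ in \eqref{aplus2}) impossible. This uniform-over-$k$ bound, which must survive the interference-driven evolution of the power allocations, is the part I expect to require the most care: a single threshold on $\eta_r$ has to dominate all reachable network states simultaneously.

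With the backhaul clauses inactive, rule \eqref{aplus} reduces to $a_i(k+1)=1$ exactly when $\tau^{(rr)}_i(k)\le \tau^{(br)}_i(k)$ and $a_i(k+1)=0$ otherwise, and rule \eqref{aplus2} reduces to the analogous comparison of $\tau^{(bb)}_i(k)$ against $\tau^{(br)}_i(k)$. In both cases the ``otherwise'' branch \eqref{localadaptc}, \eqref{localadaptc2} is never selected, so $a_i(k+1)$ no longer depends on $a_i(k)$. Applying Corollary~\ref{COR1} and substituting $E^{(b)}_i={\bf{D}}_{b}(i)$, $E^{(r)}_i={\bf{D}}_{r}(i)$, each comparison reduces to the sign of ${\bf{D}}_{b}(i)-{\bf{D}}_{r}(i)$: for a UE with $r_i\le N_r$, $\tau^{(br)}_i>\tau^{(rr)}_i \iff {\bf{D}}_{b}(i)<{\bf{D}}_{r}(i)$, giving $a_i=1$; for a UE with $r_i=N_r+1$, the BS-RS versus BS-BS comparison gives $a_i=1$ precisely when ${\bf{D}}_{b}(i)<{\bf{D}}_{r}(i)$ as well. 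Crucially, this criterion is independent of the time-varying $E^{(d)}_i$, hence constant across iterations.

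Finally, because $a_i(k+1)$ depends only on the fixed quantities ${\bf{D}}_{b}(i)$ and ${\bf{D}}_{r}(i)$ and not on $a_i(k)$, the assignment is pinned for every $k\ge 1$ to $a_i=1$ when ${\bf{D}}_{b}(i)<{\bf{D}}_{r}(i)$ and $a_i=0$ when ${\bf{D}}_{b}(i)>{\bf{D}}_{r}(i)$, no matter what ${\bf{A}}(0)$ was; this establishes the ``regardless of the initial PoA assignments'' clause. The tie case ${\bf{D}}_{b}(i)={\bf{D}}_{r}(i)$ occurs with probability zero under continuous fading and is resolved to $a_i=0$ by the non-strict inequality in the switching rules, consistent with the statement.
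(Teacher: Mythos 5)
Your proof is correct and follows essentially the same route as the paper's: take $\eta_r$ large enough that the backhaul (rate-differential) clauses in \eqref{aplus} and \eqref{aplus2} are never binding, so the PoA choice reduces via Corollary~\ref{COR1} to comparing the fixed noise-limited effective interferences $E^{(b)}_i={\bf{D}}_b(i)$ and $E^{(r)}_i={\bf{D}}_r(i)$, independently of $a_i(k)$ and of the evolving $E^{(d)}_i$. The only difference is that you make explicit the uniform-over-$k$ bound ($V_r(k)\geq \eta_r - R_{\max}$ with $R_{\max}$, $\tau_{\max}$ independent of $\eta_r$) that the paper simply asserts by hypothesizing an $\eta_r$ exceeding the maximum forwardable aggregate rate; this is a useful sharpening, not a different argument.
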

\begin{proof}
Suppose that there is some $\eta_r$ which is more than the maximum aggregate data rate the UEs can ever forward through the relay. Under NDT, the choice of a PoA by each UE depends on the peak data rate it can achieve via waterfilling allocation on the two links. Recall that as per Corollary~\ref{COR1}, $\tau^{(rr)}_i< \tau^{br}_i$ when $E^{(b)}_i< E^{(r)}_i$ for UE $i:r_i\leq N_r$. Likewise, $\tau^{(bb)}_i< \tau^{br}_i$ when $E^{(b)}_i< E^{(r)}_i$ for UE $i:r_i=N_r+1$. If ${\bf{D}}_b(i)< {\bf{D}}_{r}(i)$ for UE $i$ then $E^{(b)}_i(k) < E^{(r)}_i(k)$ always. Hence the UE chooses the BS as the PoA under NDT on the adaptive-PoA link. Conversely, $E^{(b)}_i(k) \geq E^{(r)}_i(k)$ when ${\bf{D}}_b(i)\geq {\bf{D}}_{r}(i)$ and the UE will choose the RS. This applies regardless of which PoA the UE initially attaches to.
\end{proof}
Given a high enough backhaul capacity, we now derive a solution of \eqref{opt_func2} for each UE in the network under NDT. Corresponding to Corollary~\ref{COR2}, the choice of PoAs of all UEs on their adaptive-PoA links can be represented by ${\bf{A}}^{*}= diag {[a_1, a_2, \cdots a_n ]}$.
\begin{theorem}
The transmit powers converge to ${\bf{P}}_d^*=\frac{1}{W_a+W_d}\left[{\bf{I}}+\frac{W_a{\bf{F}}}{W_a+W_d}\right]^{-1}{\bigg[}{W_d\bf{P}}_{\max}-W_a{\bf{D}}_{d}+W_d{\bf{A}}^*{\bf{D}}_{b}+W_d\overline{\bf{A}}^*{\bf{D}}_{r}{\bigg]}$ given a large enough RS-to-BS link capacity $\eta_r$ and if ${\bf{0}}\leq {\bf{P}}_d^*\leq {\bf{P}}_{\max}$. \label{theoremx}
\end{theorem}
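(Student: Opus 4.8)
The plan is to reduce the coupled power-and-PoA dynamics to a single affine fixed-point iteration with constant coefficients, and then invoke a contraction argument. First I would use Corollary~\ref{COR2}: in the high-$\eta_r$ regime each UE's adaptive-PoA choice $a_i$ settles, in finitely many steps and independently of the initialization, to the value $a_i^*$ determined by the sign of $\mathbf{D}_b(i)-\mathbf{D}_r(i)$. Hence for all $k$ beyond some $k_0$ we may take $\mathbf{A}(k)=\mathbf{A}^*$, so that by \eqref{effintfvec} the adaptive-link interference vector $\mathbf{E}_a=\mathbf{A}^*\mathbf{D}_b+\overline{\mathbf{A}}^*\mathbf{D}_r$ is constant. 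Moreover, ``large enough $\eta_r$'' means the backhaul is never a bottleneck, so the overload branches \eqref{pc}--\eqref{pd} of \eqref{power_alloc} are never triggered and only the water-filling branches \eqref{pa}--\eqref{pb}, which coincide with Proposition~\ref{a1}, stay active.

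Next I would collapse \eqref{pa}--\eqref{pb} into one vector recursion. The two branches differ only in whether the adaptive-interference entry is $\mathbf{D}_r(i)$ or $\mathbf{D}_b(i)$, which is exactly what $\mathbf{E}_a$ records; and under the hypothesis $\mathbf{0}\le\mathbf{P}_d^*\le\mathbf{P}_{\max}$ the clips $\min(\cdot,P_{\max,i})$ and $(\cdot)^+$ are inactive at the fixed point and in a neighbourhood of it. Substituting $\mathbf{E}_d(k)=\mathbf{D}_d+\mathbf{F}\mathbf{P}_d(k)$ and writing $\mathbf{c}=\frac{1}{W_a+W_d}\left[W_d\mathbf{P}_{\max}-W_a\mathbf{D}_d+W_d\mathbf{E}_a\right]$ then yields the affine map $\mathbf{P}_d(k+1)=-\frac{W_a}{W_a+W_d}\mathbf{F}\,\mathbf{P}_d(k)+\mathbf{c}=\mathbf{M}\mathbf{P}_d(k)+\mathbf{c}$, with $\mathbf{M}=-\frac{W_a}{W_a+W_d}\mathbf{F}$.

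Setting the fixed point $\mathbf{P}_d^*=(\mathbf{I}-\mathbf{M})^{-1}\mathbf{c}$ gives $\mathbf{I}-\mathbf{M}=\mathbf{I}+\frac{W_a\mathbf{F}}{W_a+W_d}$, and expanding $\mathbf{E}_a=\mathbf{A}^*\mathbf{D}_b+\overline{\mathbf{A}}^*\mathbf{D}_r$ inside $\mathbf{c}$ reproduces the claimed closed form verbatim; this part is routine algebra. For convergence I would subtract the fixed-point relation from the recursion to obtain $\mathbf{e}(k+1)=\mathbf{M}\,\mathbf{e}(k)$ for the error $\mathbf{e}(k)=\mathbf{P}_d(k)-\mathbf{P}_d^*$, so that $\mathbf{e}(k)=\mathbf{M}^{k-k_0}\mathbf{e}(k_0)\to\mathbf{0}$ precisely when $\rho(\mathbf{M})<1$. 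Since $\mathbf{F}\ge\mathbf{0}$ with zero diagonal and $\frac{W_a}{W_a+W_d}<1$, one has $\rho(\mathbf{M})=\frac{W_a}{W_a+W_d}\rho(\mathbf{F})$.

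I expect the genuine crux to be establishing $\rho(\mathbf{M})<1$ and handling the clips away from the fixed point, rather than the bookkeeping above. The clean route is an $\ell_\infty$ contraction: if $\frac{W_a}{W_a+W_d}\max_i\sum_j\mathbf{F}(i,j)<1$ — a mild ``interference not too strong'' condition that forces $\rho(\mathbf{M})<1$ — then $\mathbf{M}$ is an $\ell_\infty$-contraction, and because the coordinate-wise projection onto $[\mathbf{0},\mathbf{P}_{\max}]$ implied by the clips is non-expansive, the full clipped map of \eqref{power_alloc} is itself a contraction on the box. Banach's fixed-point theorem then delivers global convergence to the unique $\mathbf{P}_d^*$ and simultaneously justifies that the clips never obstruct the transient. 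The delicate point to flag is that the stated hypothesis $\mathbf{0}\le\mathbf{P}_d^*\le\mathbf{P}_{\max}$ only guarantees the fixed point lies in the box and makes $(\mathbf{I}+\frac{W_a\mathbf{F}}{W_a+W_d})^{-1}$ meaningful; it does not by itself force the iteration to contract toward $\mathbf{P}_d^*$, so the spectral/row-sum bound on $\mathbf{F}$ is the essential ingredient I would need to secure.
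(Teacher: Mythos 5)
Your proposal follows essentially the same route as the paper's own proof: fix the PoA assignment $\mathbf{A}^*$ via Corollary~\ref{COR2}, observe that large $\eta_r$ keeps only the water-filling branches \eqref{pa}--\eqref{pb} active, substitute $\mathbf{E}_d(k)=\mathbf{D}_d+\mathbf{F}\mathbf{P}_d(k)$ to get the affine recursion $\mathbf{P}_d(k+1)=\mathbf{N}-\mathbf{M}\mathbf{P}_d(k)$, and pass to the limit $[\mathbf{I}+\mathbf{M}]^{-1}\mathbf{N}$ under the spectral-radius condition $\rho\bigl(\tfrac{W_a}{W_a+W_d}\mathbf{F}\bigr)<1$. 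Your two refinements --- handling the $\min(\cdot,P_{\max,i})$ and $(\cdot)^{+}$ clips via non-expansiveness of the box projection, and flagging that the hypothesis $\mathbf{0}\leq\mathbf{P}_d^*\leq\mathbf{P}_{\max}$ alone does not force contraction (the spectral-radius bound, stated only inside the paper's proof and in Theorem~\ref{finiteConvf}, is the essential ingredient) --- are correct and tighten points the paper leaves implicit, but they do not change the underlying argument.
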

\begin{proof}
For some large $\eta_r$, the rate differential at each relay is larger than the thresholds set in (\ref{localadapta}) and in (\ref{localadaptb}). Hence, the choice of PoA on the adaptive-PoA link depends only on the peak data rates $\tau^{(rr)}_i$ and $\tau^{(br)}_i$ if $r_i \leq N_r$. Conversely, if $r_i=N_r+1$ then the choice is between $\tau^{(bb)}_i$ and $\tau^{(br)}_i$. Hence, regardless of the transmit powers, as per Corollary~\ref{COR2}, let ${\bf{A}}^*$ represent the choice of PoAs given the normalized noise powers on the channels for adaptive-PoA links:  the nodes allocate transmit power according to either \eqref{pa} or \eqref{pb}. We represent the power updates for the dedicated-PoA links in matrix notation as 
\begin{displaymath}
{\bf{P}}_d(k+1)=\frac{\left[W_d{\bf{P}}_{\max}-W_a{\bf{E}}_d(k)+W_d{\bf{E}}_{a}(k)\right]}{W_a+W_d}
\end{displaymath}
\begin{displaymath}
\hspace*{0.9cm}=\frac{1}{W_a+W_d}{[}{W_d\bf{P}}_{\max}-W_a{\bf{D}}_d - W_a{\bf{F}}{\bf{P}}_d(k)+\\
\end{displaymath}
\begin{displaymath}
\hspace{0.12in}W_d{\bf{A}}^*{\bf{D}}_{b} + W_d \overline{{\bf{A}}^*}{\bf{D}}_{r}{]}\\
\end{displaymath}
\begin{equation}
\begin{split} \label{update_ue_rs}
&=\frac{{[}{W_d\bf{P}}_{\max}-W_a{\bf{D}}_{d}+W_d{\bf{A}}^*{\bf{D}}_{b}+W_d\overline{\bf{A}}^*{\bf{D}}_{r}{]}}{{W_a+W_d}}\\
&\hspace{0.15cm} - \frac{W_a}{W_a+W_d}{\bf{F}}{\bf{P}}_d(k) \\
&={\bf{N}}-{\bf{M}}{\bf{P}}_d(k)
\end{split}
\end{equation}
where ${\bf{N}}= \frac{1}{W_a+W_d}{[}W_d{\bf{P}}_{\max}-W_a{\bf{D}}_{d}+W_d{\bf{A}}^*{\bf{D}}_{b}+W_d\overline{\bf{A}}^*{\bf{D}}_{r}{]}$ and ${\bf{M}}= \frac{W_a}{W_a+W_d}{\bf{F}}$. 
The above evolves to:
\begin{equation}\label{powerseries}
\begin{split}
{\bf{P}}_d(k+1)&={\bf{N}}-{\bf{M}}\left({\bf{N}} - {\bf{M}}\left({\bf{N}}-{\bf{M}}\left(\cdots {\bf{P}}(0) \right) \right)\right) \\
&={[}{\bf{I}} - {\bf{M}}+{\bf{M}}^2-\cdots{]}{\bf{N}}+{\bf{M}}^{k}{\bf{P}}_d(0)\\
\lim\limits_{k \rightarrow \infty} {\bf{P}}_d(k+1)&={\bf{P}}^*_d=\left[{\bf{I+M}}\right]^{-1}{\bf{N}}\\
&=\frac{1}{W_a+W_d}\left[{\bf{I}}+\frac{W_a{\bf{F}}}{W_a+W_d}\right]^{-1}\\
{\bigg[}W_d{\bf{P}}_{\max} - &W_a{\bf{D}}_{d}+W_d{\bf{A}}^*{\bf{D}}_{b}+W_d\overline{\bf{A}}^*{\bf{D}}_{r}{\bigg]}
\end{split}
\end{equation}
if the spectral radius (maximum absolute eigenvalue) of $\frac{W_a}{W_a+W_d}{\bf{F}}$ is less than one and where ${\bf{P}}_d(0)$ is the initial transmit power vector of the dedicated-PoA links. Likewise, the corresponding transmit powers on the adaptive-PoA channels are simply the difference between ${\bf{P}}_{\max}$ and  (\ref{update_ue_rs}). They evolve to 
$\lim\limits_{k \rightarrow \infty} {\bf{P}}_a(k)={\bf{P}}_a^*={\bf{P}}_{\max}- {\bf{P}}^*_d$.
\end{proof}
\subsection{Limited $\eta_r$ regime}
Next we consider the general case where the backhaul capacity is limited. 
\begin{theorem}
Given a large enough tolerable $\widehat{V}$, NDT adaptations will converge if the spectral radius of ${\bf{F}}$ is less than $\frac{W_a+W_d}{W_a}$. \label{finiteConvf}
\end{theorem}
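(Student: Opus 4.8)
The plan is to reduce the limited-$\eta_r$ dynamics to the affine contraction already established in Theorem~\ref{theoremx}, by showing that a sufficiently large $\widehat{V}$ switches off the two features that distinguish this regime: the power-reduction branch \eqref{pd} and the backhaul-triggered PoA switches in \eqref{aplus} and \eqref{aplus2}. First I would observe that with finitely many UEs, each transmitting at most $P_{\max,i}$, every $\gamma^{(d)}_i$ is bounded, so each per-link rate $W_d\eta(\gamma^{(d)}_i)$ and $W_a\eta(\gamma^{(r)}_i)$ is bounded; hence the aggregate relay load is bounded and, from \eqref{Vr}, there is a finite $B$ with $V_r(k)\ge -B$ for all $r$ and all $k$. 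Choosing $\widehat{V}>B$ (and, for the test in \eqref{aplus2}, also larger than the trivially bounded maximum of $V_{r^A_i}+q_i$) makes the conditions $V_{r_i}(k)<-\widehat{V}$ and $V_{r_i}(k)+q_i(k)\le\tau^{(br)}_i(k)-\widehat{V}$ unsatisfiable. Consequently branch \eqref{pd} is never taken and the overload-driven PoA switches never fire.

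Next I would show the PoA vector freezes after finitely many steps. With the overload triggers disabled, the surviving conditions in \eqref{aplus} and \eqref{aplus2} are purely peak-rate comparisons, $\tau^{(rr)}_i$ versus $\tau^{(br)}_i$ (or $\tau^{(bb)}_i$ versus $\tau^{(br)}_i$). Because the adaptive-PoA links are noise-limited, $E^{(b)}_i=\mathbf{D}_b(i)$ and $E^{(r)}_i=\mathbf{D}_r(i)$ are constants, so by Corollary~\ref{COR1} these comparisons are time-invariant and the rule collapses to the noise-power test of Corollary~\ref{COR2}. A short case check then shows each $a_i$ is monotone: a UE with $\mathbf{D}_b(i)\le\mathbf{D}_r(i)$ makes \eqref{localadaptb} fire, sets $a_i=1$, and can never return; a UE with $\mathbf{D}_b(i)>\mathbf{D}_r(i)$ can only leave $a_i=1$ for $a_i=0$ via \eqref{localadapta}, and once at $a_i=0$ the sole route back (the disabled overload branch) is unavailable. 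Thus each $a_i$ changes at most once, the total number of switches is at most $n$, and there is a finite $k_0$ after which $\mathbf{A}(k)\equiv\mathbf{A}^*$.

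For $k\ge k_0$ I would then analyze the frozen-topology power map. Partition the UEs into those currently water-filling (branches \eqref{pa}--\eqref{pb}) and those holding (branch \eqref{pc}, which is the identity in that coordinate). On the water-filling block the update is exactly the affine map $\mathbf{P}_d(k+1)=\mathbf{N}-\mathbf{M}\mathbf{P}_d(k)$ of \eqref{update_ue_rs}--\eqref{powerseries}, restricted to that block and driven by the fixed powers of the held coordinates. Since $\mathbf{M}=\frac{W_a}{W_a+W_d}\mathbf{F}\ge 0$ is nonnegative, the relevant principal submatrix has spectral radius at most $\rho(\mathbf{M})=\frac{W_a}{W_a+W_d}\rho(\mathbf{F})<1$ under the hypothesis $\rho(\mathbf{F})<\frac{W_a+W_d}{W_a}$, so each fixed partition yields a contraction with a unique fixed point, exactly as in Theorem~\ref{theoremx}.

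The hard part will be ruling out indefinite toggling of the hold/water-fill partition, i.e. branch \eqref{pc} versus \eqref{pb}, as each $V_{r_i}$ crosses zero. I would handle this by noting that the partition takes only finitely many values and that, with the held coordinates constant, the water-filling block converges to the fixed point associated with its current partition; the induced rate differentials then settle to a definite sign, so away from the non-generic boundary case $V_{r_i}^\infty=0$ the partition is eventually constant. Once both $\mathbf{A}(k)$ and the partition are frozen, the surviving contraction forces $\mathbf{P}_d(k)$ to converge, with $\mathbf{P}_a(k)=\mathbf{P}_{\max}-\mathbf{P}_d(k)$ following from \eqref{localadapt3}; this yields a fixed point of the NDT map and hence a solution of \eqref{opt_func2}.
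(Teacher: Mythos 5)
Your first two steps track the paper's proof closely: the paper likewise argues that a large enough $\widehat{V}$ prevents the overload-triggered PoA switches and the power-reduction branch \eqref{pd}, and uses Corollary~\ref{COR2} to conclude that the PoA comparisons are time-invariant so that $\mathbf{A}(k)$ settles to $\mathbf{A}^*$. Your explicit lower bound $V_r(k)\ge -B$ and the monotonicity argument showing each $a_i$ flips at most once are a more careful rendering of what the paper asserts in one sentence.

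The genuine gap is exactly where you flag ``the hard part.'' Your resolution of the hold/water-fill toggling is circular: you claim that ``with the held coordinates constant, the water-filling block converges to the fixed point associated with its current partition; the induced rate differentials then settle to a definite sign,'' but the partition can change \emph{before} that block-wise convergence is achieved, since each $V_{r_i}(k)$ moves at every iteration as the water-filling UEs adjust their powers. Finiteness of the set of partitions does not preclude infinite cycling among them, and appealing to a ``non-generic boundary case'' introduces a genericity caveat that the theorem statement does not carry. The paper closes this hole with a different and cleaner device: it observes that whether a UE water-fills (branches \eqref{pa}--\eqref{pb}) or holds (branch \eqref{pc}), the resulting trajectory is a valid trajectory of an \emph{asynchronous iterative} system for the single affine map $\mathbf{P}_d \mapsto \mathbf{N}-\mathbf{M}\mathbf{P}_d$ of \eqref{powerseries} --- a held coordinate is just a component that reuses an update from an earlier iteration --- and then invokes the standard result (Chazan--Miranker type, cited as \cite{waterfilling_Shum07}\cite{AsyncIterSysAhmad2013}) that such iterations converge for \emph{every} update schedule when $\rho(\mathbf{M})=\frac{W_a}{W_a+W_d}\rho(\mathbf{F})<1$. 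That result makes it unnecessary to show the hold/water-fill partition ever stabilizes, which is precisely the step your argument cannot complete. To repair your proof you would either need to import that asynchronous-iteration theorem, or prove directly that the map is a contraction in a weighted sup-norm induced by a Perron-type vector of $\mathbf{M}$ (so that holding a coordinate never increases the distance to the fixed point while updating strictly decreases it), which amounts to reproving the same lemma.
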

\begin{proof}
According to Corollary~\ref{COR2} the inequality between the peak data rates $\tau^{(rr)}_i$, $\tau^{(br)}_i$ or $\tau^{(bb)}_i$ always holds given the inequality relationship between ${\bf{D}}_b(i)$ and ${\bf{D}}_r(i)$ for each UE $i$. Thus, given a high enough $\widehat{V}$, the nodes will not cycle back and forth their links between different PoAs as per \eqref{aplus} and \eqref{aplus2}. The power allocation of UE $i: r_i= N_r+1$ will only follow  the waterfilling allocation either in \eqref{pb} or \eqref{pa}. Likewise, if UE $i: r_i\leq N_r$, then this node will either perform waterfilling allocation as per \eqref{pa}\eqref{pb} or maintain its transmit power from the preceding iteration as per \eqref{pc}. Thus, the evolution of transmit power for the system is equivalent to that in \eqref{powerseries} except that some nodes may not update their transmit power every iteration (i.e. their transmit power level is based on an update from an earlier iteration). The power updates can therefore be described as an \emph{asynchronous iterative} system \cite{AsyncIterSysAhmad2013}\cite{waterfilling_Shum07}. It is known that such a system converges if the iterative matrix $\bf{M}$ in \eqref{powerseries} has a spectral radius less than one \cite{waterfilling_Shum07} (corresponding to a spectral radius of less than $\frac{W_a+W_d}{W_a}$ for ${\bf{F}}$). 
\end{proof}
\newcounter{optimalN}
\begin{figure*}[!b]
\hrulefill
\setcounter{optimalN}{\value{equation}}
\setcounter{equation}{17}
\begin{equation}{\small
\begin{split}
\eta^{(A)}_N &= \underset{P^{(d)}_i P^{(a)}_i a_i}{\max}{\hspace{0.2cm}} {W_d\sum^{n=2}_{i=1}\log_2(1+P^{(d)}_i\frac{\sigma^2_i}{n_d})+W_a\left(\sum^{n=2}_{i=1} \eta\left( P^{(a)}_i\frac{g_{i,r}+g_{i,b}}{n_a}\right)\right)} \hspace*{0cm}  \\
\eta^{(B)}_N &= \underset{P^{(d)}_i P^{(a)}_i a_i}{\max}\hspace{0.2cm} W_d\eta\left(\frac{g^{(1)}_{1,1}P^{d}_1}{n_d}\right) + W_d\eta\left(\frac{g^{(1)}_{2,2}P^{(d)}_2}{n_d+g^{(1)}_{1,2}P^{(d)}_1}\right) + W_a\left(\sum^{n=2}_{i=1} (1-a_i)\eta\left(\frac{g_{i,r} P^{(a)}_i}{n_a}\right)+\sum^{n=2}_{i=1} a_i\eta\left(\frac{g_{i,b} P^{(a)}_i}{n_a}\right)\right) \hspace*{0.4cm} \\
\eta^{(C)}_N &= \underset{P^{(d)}_i P^{(a)}_i a_i}{\max}{\hspace{0.1cm}}W_d \sum^2_{i=1} \eta\left(\frac{g^{(1)}_{i,i}P^{d}_i}{n_d+ \sum^2_{j=1;j\neq i} g^{(1)}_{j,i}P^{d}_j}\right) +W_a\left(\sum^{n=2}_{i=1} (1-a_i)\eta\left(\frac{g_{i,r}P^{(a)}_i}{n_o}\right)+\sum^{n=2}_{i=1} a_i\eta\left(\frac{g_{i,b} P^{(a)}_i}{n_o}\right)\right)\hspace*{1.2cm} \\
P^{(d)}_i+P^{(a)}_i\leq P_{max,i}, \hspace*{-8.8cm}\\
P^{(d)}_i,P^{(a)}_i\geq 0,  \hspace*{-8cm}\\
a_i \in \{0,1\}, \hspace*{-8.5cm}\\
i\in\{1,2\} \hspace*{-8.6cm}
\end{split}}\label{Optimality}
\end{equation}
\setcounter{optimalN}{\value{equation}}
\hrulefill 
\end{figure*}

\begin{figure}[!t]
\subfigure[] {\includegraphics[width=0.5\textwidth]{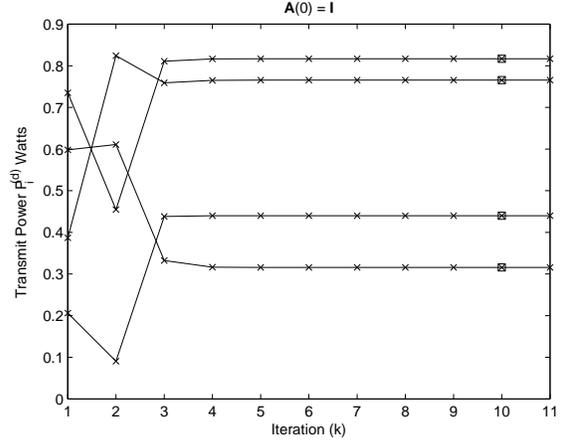}}
\subfigure[] {\includegraphics[width=0.5\textwidth]{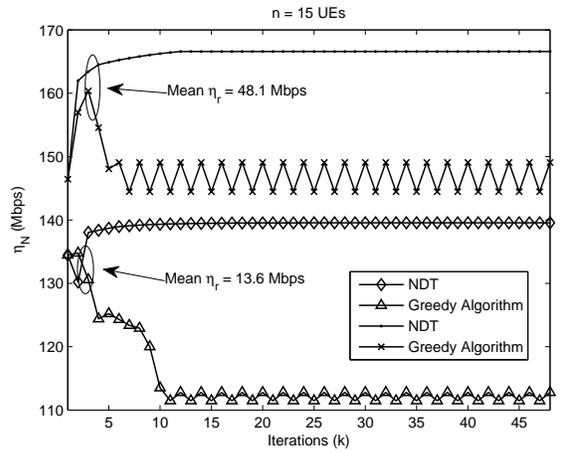}}
\caption{(a) We observe for the network shown in Fig. \ref{fig:furt1}, the convergence of the transmit power vector ${\bf{P}}_d$ to the fixed point (superimposed on $k=10$) predicted in (\ref{powerseries}) regardless of the initial power and PoA assignments. {(b) For another sample network, we plot the evolution of network capacity and see that NDT converges within a few iterations whereas a greedy algorithm results in unstable performance.}}
\label{fig:furt2}
\end{figure}

\subsection{Practical Considerations}
{In wireless systems, feedback delay and estimation error may render CSIT imperfect \cite{Medard2000}. The adaptations will thus be based on inaccurate CSIT that may render our scheme less effective in improving network capacity. Given the estimated gain $\widehat{g^{(f)}_{i,r}}$ for UE $i$, estimation error will project the effective interference vectors in \eqref{effintfvec} to:
\begin{equation}
\begin{split}
\widehat{{\bf{E}}_d}(k) &= \widehat{{\bf{D}}_d} + \widehat{{\bf{F}}}{\bf{P}}_d(k),\\
\widehat{{\bf{E}}_{a}}(k) &= {\bf{A}}\widehat{{\bf{D}}_{b}} + \overline{\bf{A}}\widehat{{\bf{D}}_{r}}
\end{split}
\end{equation}
Thus, corresponding to the imperfect channel knowledge, if the spectral radius of $\widehat{{\bf{F}}}$ is less than $\frac{W_a+W_d}{W_a}$ then the results in Corollary~\ref{COR2}, Theorem~\ref{theoremx} and \ref{finiteConvf} still hold. Therefore, under NDT, adaptation would converge with the system vectors adjusted for estimation error. Moreover, UEs could switch between different transmission modes every iteration using the multipoint signaling feedback mechanism provided in the LTE architecture \cite{3gppComP12}. Hence, implementing our proposed scheme will not require any additional overhead.}

\section{Performance Optimality}
We now discuss some relevant notions of optimality based on either interference cancellation or cooperative communication with centralized optimization, so as to provide achievable rate comparison with NDT. As in \cite{sahin2011_icobr, yeTian2012_icobr, Sahin2011_icobr2, Razaghi2013_ICOBR}, we also consider a network with $n=2$ UEs, and $N_r=1$ relay. The first UE has a dedicated-PoA link to the BS and the other has a dedicated-PoA link to the RS. The two dedicated-PoA links operate on the same channel. The backhaul capacity $\eta_r$ is assumed high enough such that it does not become rate-limiting. 

\subsection{Cooperative MIMO (CO-MIMO)}
As shown in \cite{Gesbert10}, a multi-user cooperative network can be viewed as a virtual MIMO system. In our system, the two UEs collaborate as the transmitter set and the PoAs (BS and the RS) as the joint receivers in a $2\times 2$ MIMO. Given the channel matrix  
${\bf{H}}=\begin{bmatrix}
       h^{(1)}_{1,1} & h^{(1)}_{1,2}        \\[0.3em]
       h^{(1)}_{2,1} & h^{(1)}_{2,2}         \\[0.3em]
     \end{bmatrix}$, {let its $i^{th}$ singular value obtained via Singular Value Decomposition (SVD) be $\sigma_i$.} The maximum sum rate $\eta^{(A)}_N$ is expressed in \eqref{Optimality}, where the left-most term denotes the {equivalent MIMO channel capacity using SVD \cite{goldsmith05} of the dedicated-PoA links and the other summation term indicates the rate achieved through maximal ratio diversity combining on the adaptive-PoA links.} 

\subsection{Asymmetric Interference Cancellation (AIC)}
In AIC, the BS can perform interference cancellation whereas the RS does not have any such capability. The maximum sum rate is bounded by $\eta^{(B)}_N$ in \eqref{Optimality} where the left-most term denotes the achievable rate for the dedicated-PoA UE-to-BS link when all interference received from the UE-to-RS dedicated link can be eliminated \cite{EtkinTse2008_GIC}. 

\subsection{Cross-Layer Optimization (C-OPT)}
In C-OPT, the optimal choice of PoAs and power allocations is determined to maximize the sum rate without any cooperation or interference cancellation and is expressed as $\eta^{(C)}_N$ in \eqref{Optimality}. \\

\begin{figure}[!t]
\subfigure[]{\includegraphics[width=0.48\textwidth]{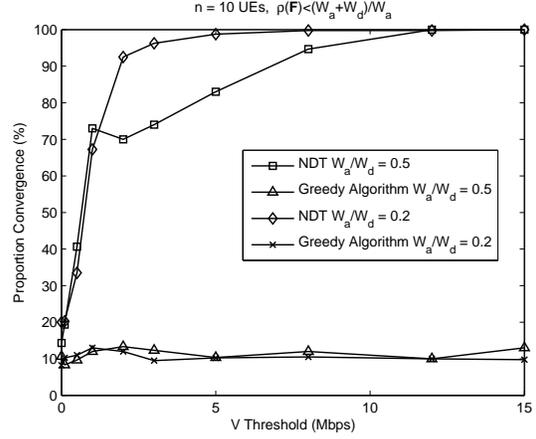}}
\subfigure[]{\includegraphics[width=0.48\textwidth]{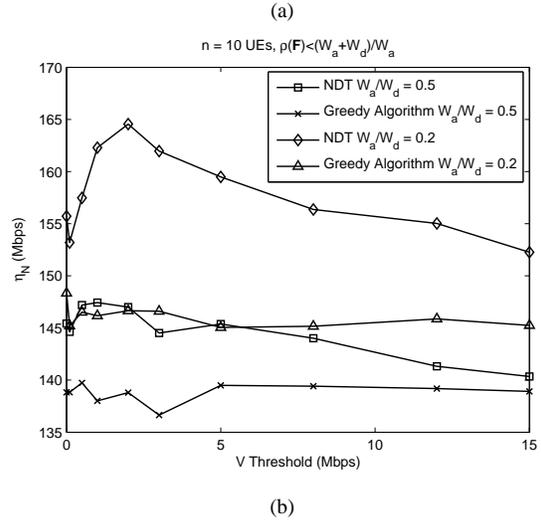}}
\caption{(a) With a higher $\widehat{V}$, the system always converges if the spectral radius constraint on $\bf{F}$ holds. The corresponding network capacity is shown in (b).}\label{res3}
\end{figure}
In the next section, through numerical results we compare the performance under CO-MIMO, AIC and C-OPT against NDT.

\section{Simulation Results}
In this section, for a multihop cellular network we use Matlab-based simulation to determine the aggregate end-to-end data rate $\eta_N$ over a range of control parameters. We assume that the cross-link power gains {of the UEs} ($g^{(f)}_{i, r_i}, g^{(q)}_{i, b}$, etc.) are of the form $\kappa^{(f)}_{j, i}\cdot d_{j, i}^{-\alpha}$ where $\kappa^{(f)}_{j, i}$ is an exponentially distributed random variable with unit variance (due to Rayleigh fading) on channel $f$ and $\alpha$ is the path loss exponent. The fading gains are assumed independent and identically distributed for all links. There are $50$ power control iterations in each simulation trial. The transmit powers of pilot signals from the BS and RS on the downlink control channel are assumed equal. The UEs are randomly and uniformly located in cluster regions of radius $R_L$ around each of the $N_r$ relays with $P_{\max,i}=1.0$ Watts. 
\begin{figure}[!t]
\subfigure[]{\includegraphics[width=0.5\textwidth]{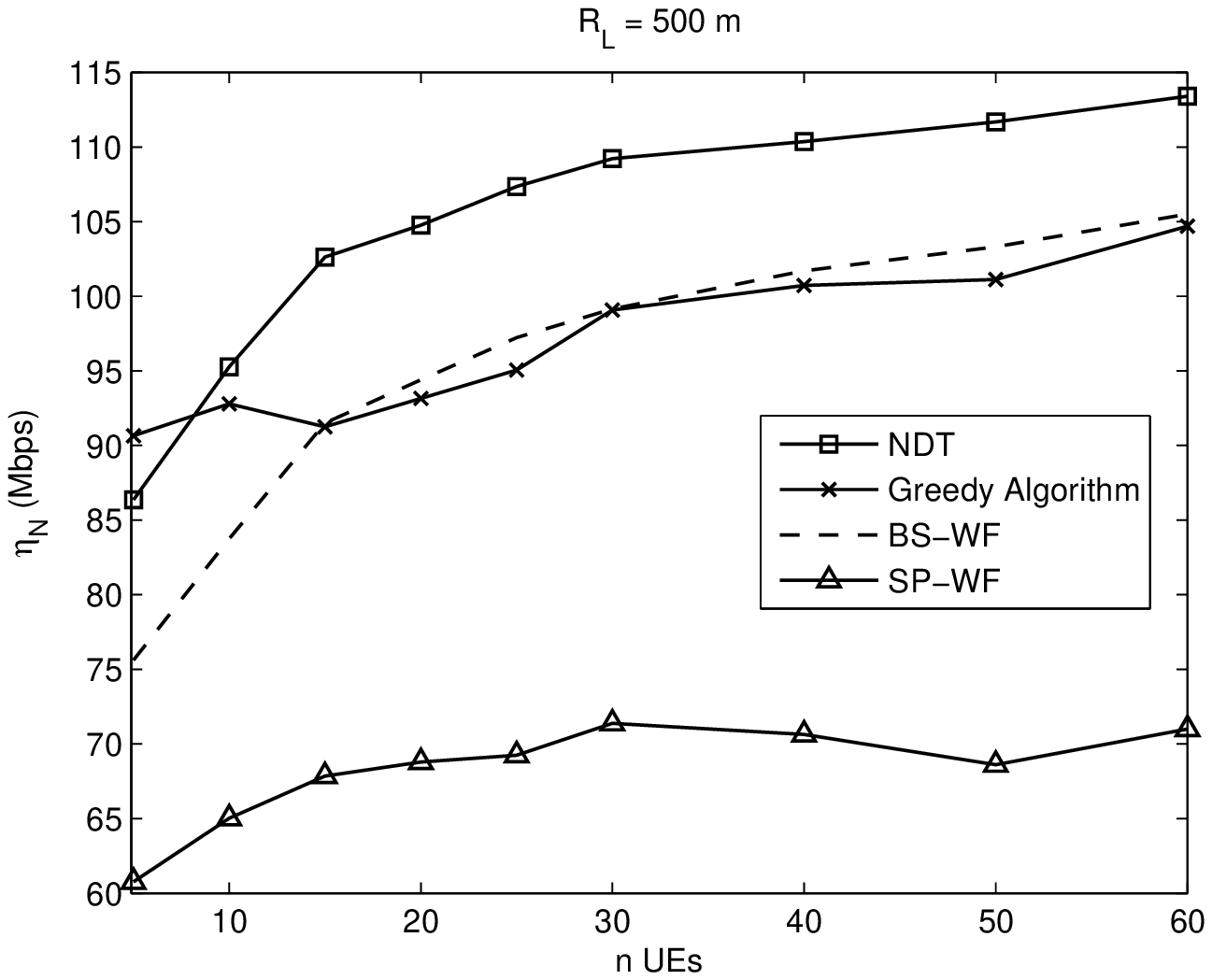}}
\subfigure[]{\includegraphics[width=0.5\textwidth]{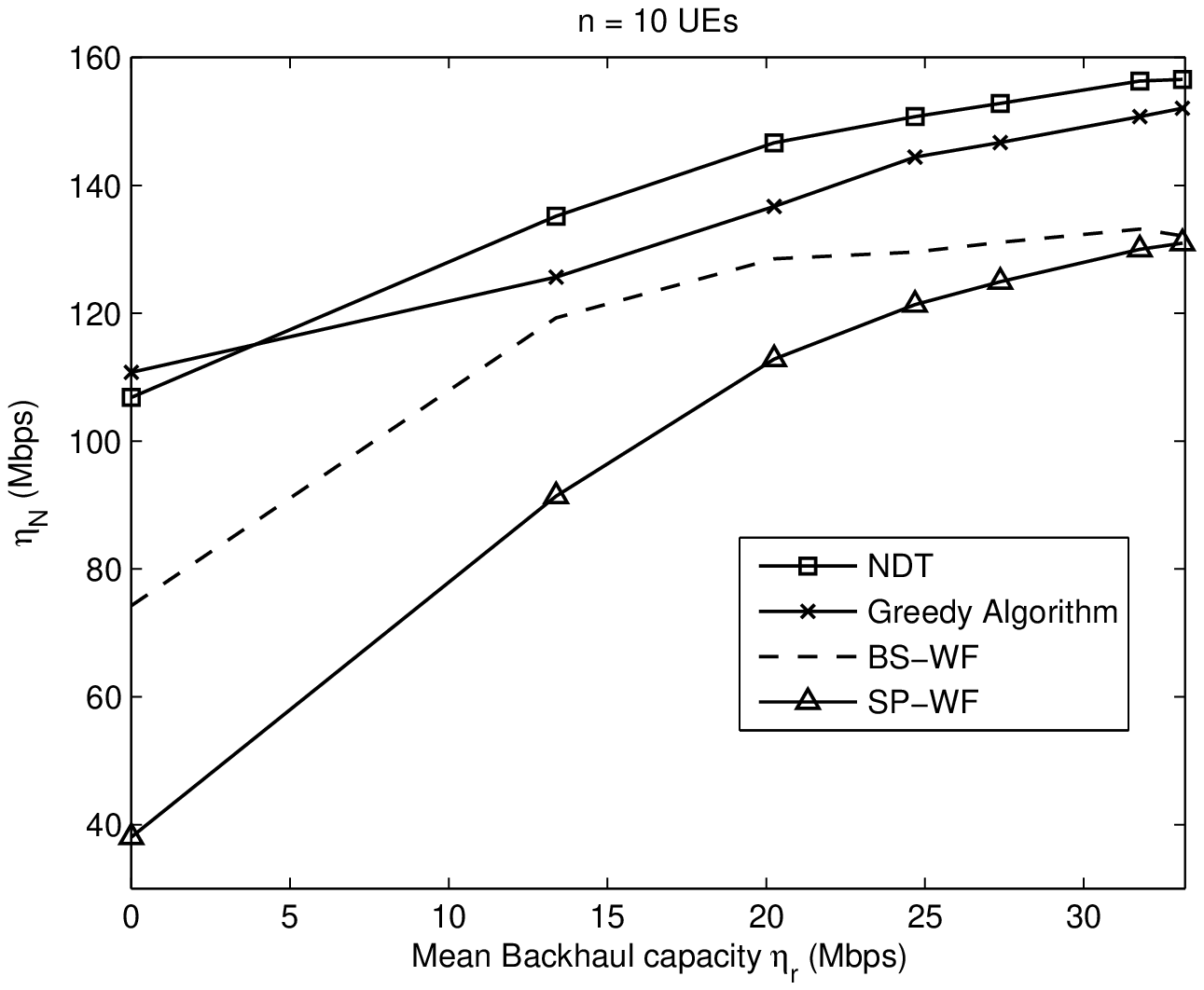}}
\caption{{Over the range of (a) $n$ or (b) $\eta_r$, BS-WF or SP-WF schemes have good performance over certain regions only. In contrast, NDT enables nodes to benefit from the best of both schemes and also adapt to the load on the backhaul links.}}\label{res1}
\end{figure}

Unless otherwise stated, we assume the following parameters: for NDT $z=0.8$, $\widehat{V}=5$ Mbps,  $R_L=250$ m, {noise power spectral density is $-200$ dBW/Hz (proportional to channel bandwidth \cite{verdu_2002}), $W_s = 10$ MHz and $\alpha=4.0$.}
The backhaul links operate on reserved bands from an operator's pool that are non-overlapping with channels used by the UEs. {We use the backhaul channel model based on the 3GPPTR 36.814 specifications \cite{3gpp2010}. The line-of-sight (LOS) and non-line-of-sight (NLOS) path losses of each wireless relay backhaul link are based on the urban macro model based on carrier frequency $2$ GHz, relay height $5$ m with all remaining parameters (e.g. transmit power, bandwidth or lognormal shadowing variance etc.) based on standard values specified in \cite[p.~72, p.~94]{3gpp2010}. The backhaul capacity $\eta_r$ is then determined based on Shannon capacity.} When $n$ is large, there is less bandwidth allocated to each UE. The division of the allocated bandwidth between the dedicated-PoA and adaptive-PoA links for each UE is set at $\frac{W_a}{W_d}=0.5$, unless stated otherwise. There are $N_r=4$ relays located randomly in a $2$ km by $2$ km square region around the BS with non-overlapping cluster regions.
\begin{figure}[!t]
\subfigure[]{\includegraphics[width=0.5\textwidth]{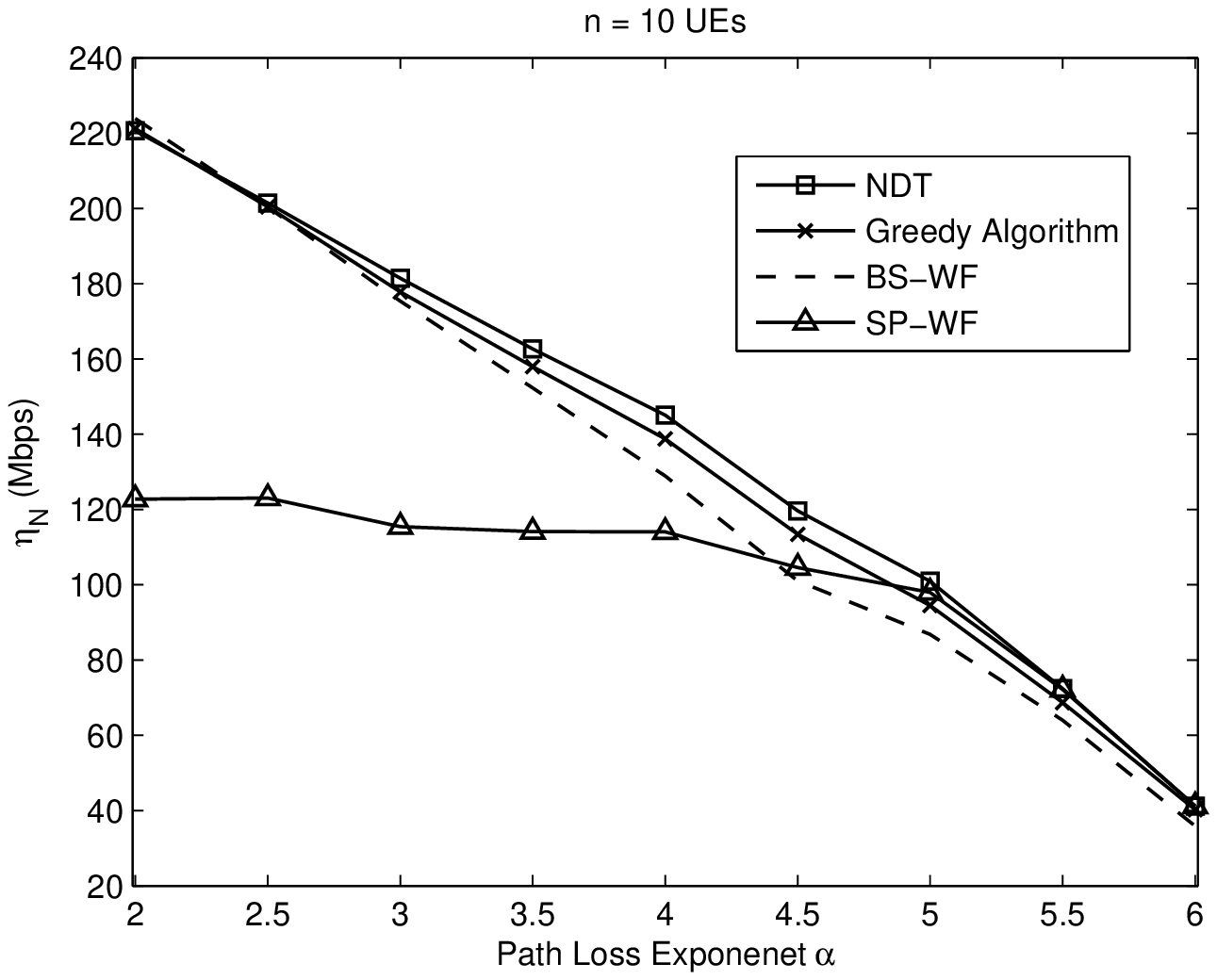}}
\subfigure[]{\includegraphics[width=0.5\textwidth]{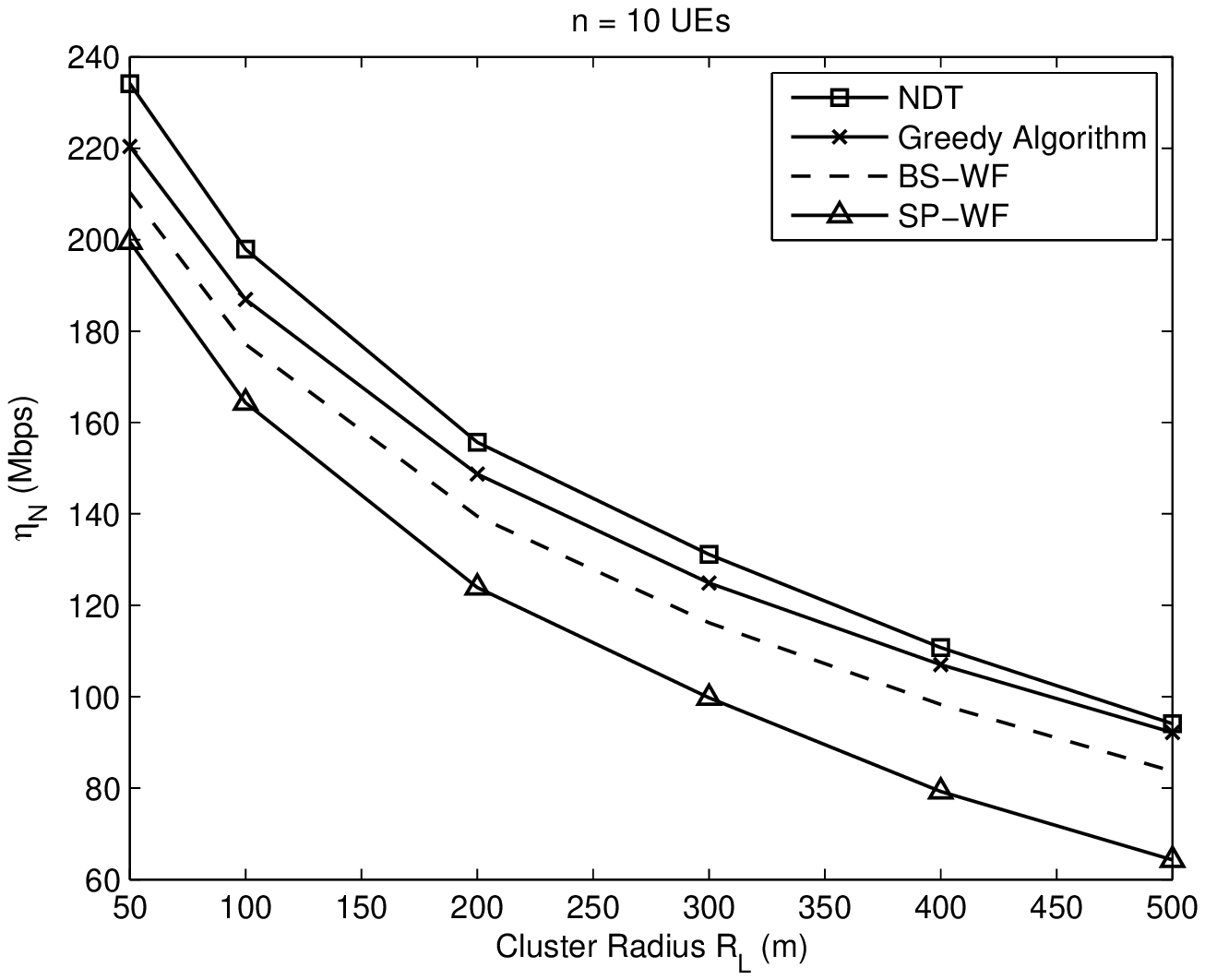}}
\caption{{(a) With increasing $\alpha$ or (b) when the cluster radius $R_L$ is larger so that the UEs are dispersed over a wider region, the path loss is increased, thereby leading to a performance decline for all schemes. NDT enables the nodes to better adapt over the range of these conditions.}}\label{res2}
\end{figure}

\begin{figure}[!t]
\subfigure[]{\includegraphics[width=0.5\textwidth]{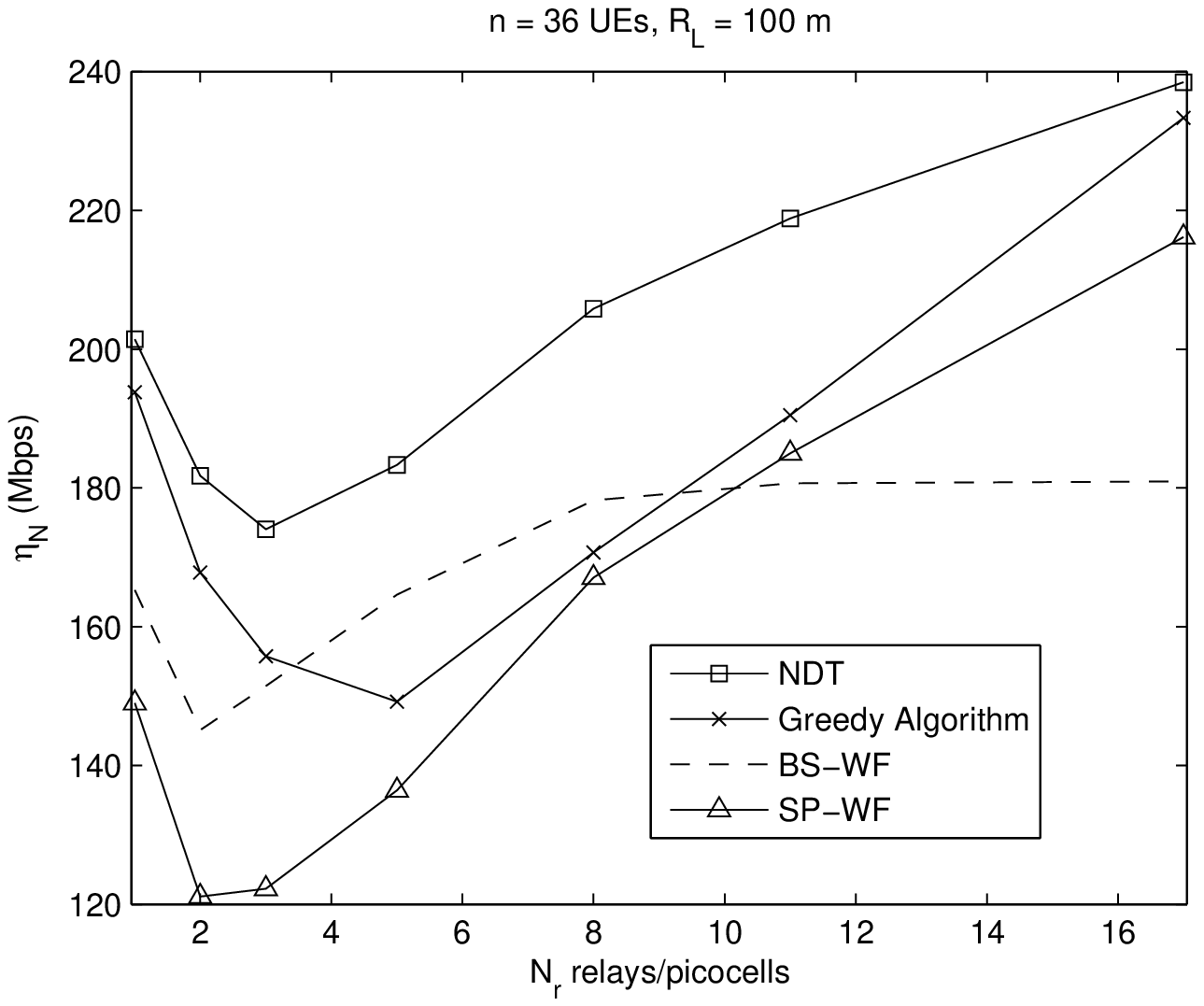}}
\subfigure[]{\includegraphics[width=0.5\textwidth]{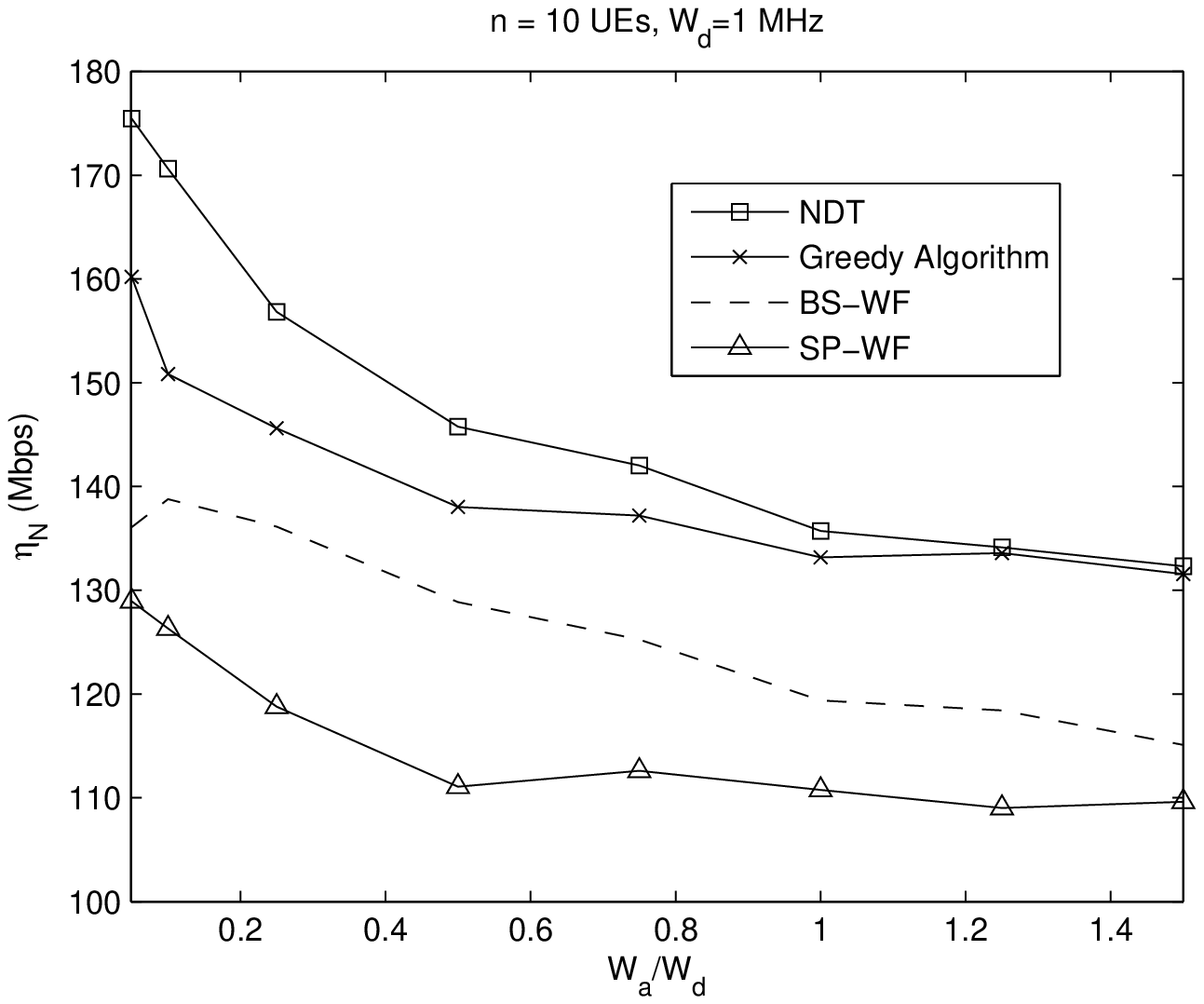}}
\caption{{(a) With increasing number of relays that have UEs clustered close to them, the multihop network approximates the case of small picocells overlaid within a macro-cell. (b) Given a fixed system bandwidth, the performance of all schemes declines with increasing $W_a/W_d$ as there is less frequency re-use.} }
\label{res4}
\end{figure}

\subsection{Convergence}
As an example of convergence, we first consider the network in Fig. \ref{fig:furt1} with $n=3$ UEs located in the positions shown. We plot the evolution of transmit powers given that initially all UEs transmit on their adaptive-PoA link to their relays with a randomly chosen initial vector ${\bf{P}}_d(0)$. In Fig. \ref{fig:furt2}(a), we observe that the transmit powers converge to the vector predicted in (\ref{powerseries}) where ${\bf{A}}^*=diag [0,0,0,1]$. In Fig. \ref{fig:furt2}(b), we plot time evolution of the network-wide performance $\eta_N$ for two sample networks. We compare NDT against approaching \eqref{opt_func2} through a {greedy algorithm}, which represents an immediate solution for $a_i(k+1), P^{(d)}_i(k+1)$ and $P^{(a)}_i(k+1)$ for UE $i$. The greedy approach, in contrast to NDT, results in an erratic evolution of the transmit powers, which correspondingly causes an unstable and reduced network capacity over time. 

In Fig. \ref{res3}, we plot the effects of the rate differential threshold $\widehat{V}$ on network capacity and the corresponding likelihood of convergence for those trials where the spectral radius $\rho(\bf{F})$ is less than $\frac{W_a+W_d}{W_a}$ (recall Theorem~\ref{finiteConvf}). With a higher $\widehat{V}$, NDT will converge $100\%$ of the time. In contrast, convergence under a greedy algorithm occurs much less frequently and results in lower performance. 

Note that in the intermediate case, when the backhaul capacity is limited and $\widehat{V}$ is small, then convergence cannot always be guaranteed as shown in Fig. \ref{res3}a. For instance, to alleviate load on the backhaul links, UEs can send more data rate directly to the BS. Thus, if $\widehat{V}$ is set too small the data rates sent via relays  decrease too fast and the backhaul links may become under-loaded. In turn this will induce the UEs to switch transmission mode and send all their data via the relays. Thus, with a small $\widehat{V}$, there may be oscillation between over-loading the backhaul links or under-utilizing them as UEs make local adaptations with instantaneous knowledge of the  system. Nonetheless, we observe in Fig. \ref{res3}b that even in these cases the likelihood of non-convergence is still relatively small.

\subsection{Network Capacity}
{We compare NDT against two schemes: \\
(1) \emph{Single-PoA Waterfilling (SP-WF)}: As in \cite{waterfilling_Shum07} each node has a single PoA; if attached to an RS, it allocates transmit power as per \eqref{pa}\eqref{localadapt3} whereas if attached to the BS it uses \eqref{pb}\eqref{localadapt3}. \\
(2) \emph{BS-PoA  Waterfilling (BS-WF)}: UEs at cell edge transmit to RS and BS both as in \cite{LiangPoor2007_resAlloc_maxmin}, whereas the remaining UEs only transmit to the BS \cite{waterfilling_Shum07}. Either type of UE allocates its transmit power as per \eqref{pb} and \eqref{localadapt3}. }

Fig. \ref{res1} shows that, in contrast to BS-WF or SP-WF, NDT offers the best of both approaches over the entire range of the parameters $n$ and $\eta_r$. Under NDT, when the load is lighter (either $\eta_r$ is larger or $n$ is smaller), the UEs only transmit to a single PoA whereas, conversely, they switch to transmitting to the BS on their adaptive-PoA links when the load is heavier. Another observation is that, when the load is large, under NDT the network performance is better than for BS-WF. This is because under NDT, a UE uses power control in \eqref{pd} and iteratively re-allocates more power on the direct link to the BS, instead of simply waterfilling over its channels. {Note that in Fig. \ref{res1}b, we obtain the mean $\eta_r$ by varying the bandwidth allocated to each relay backhaul link in the operator pool up to 30 MHz.}

In Fig. \ref{res2}, we plot the performance by varying the path loss exponent $\alpha$ and cluster radius $R_L$. We observe that, while performance for all schemes declines with an increase in either parameter, NDT still offers the best of any of the schemes. In Fig. \ref{res2}(b), with increasing $R_L$, the inter-UE distance grows and the nodes become dispersed over a wider region, and thus more distant from their receivers. While the performance for NDT declines with increasing $R_L$ (due to greater path loss), it is still better than that of other schemes. 
\begin{figure}[!t]
{\includegraphics[width=0.5\textwidth]{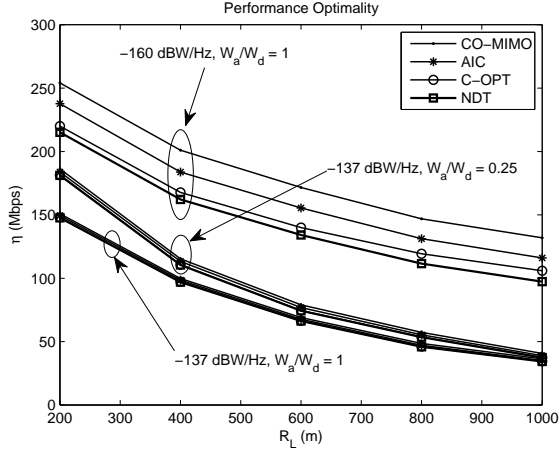}}
\caption{{In noisier systems or when the channel bandwidth of the adaptive-PoA links is larger, the performance gap between NDT and CO-MIMO, AIC and C-OPT diminishes.}}
\label{res5}
\end{figure}

In Fig. \ref{res4}(a) we consider the case with a large number of relays. Since the RS-to-BS links operate on non-UE channels, the 2-hop network would resemble the situation when picocells coexist within a macro-cellular system. Moreover, the backhaul capacity of the picocells is constrained \cite{Amitava2012,WeiYu2013} and is variable. Hence, in this scenario the relays of the 2-hop network could resemble picocell base stations. 
The UEs are evenly spread between the PoAs and where the maximum RS-BS separation is $R=1000$ m and $R_L=100$ m. An implication from the result is that when there are more small cells and the UEs can transmit to both the macro-cellular BS and the picocell BS then there are performance gains over when they are constrained to transmit only to picocell base stations. In Fig. \ref{res4}(b), we consider the impact of the channel bandwidth size of the adaptive-PoA links relative to the dedicated-PoA links. When $W_a$ is large the network capacity of all schemes becomes better. 
 
Finally, in Fig. \ref{res5}b, we compare network capacity under NDT against the mechanisms discussed in Section VII, for a 2-user 1-relay system. Each UE is located exactly $R_L$ m from its dedicated PoA perpendicular to the line representing BS-RS separation (of 2000 m). The CO-MIMO and AIC schemes essentially decode interference apart from the desired signal, which enables them much to achieve much higher rates than both NDT and cross-layer optimization. However, we observe that if the system becomes noise-limited or when the relative bandwidth of the adaptive-PoA links is larger, NDT approaches the optimal achievable performance. This is because more noise or a larger $W_a$  decreases the relative advantage of interference cancellation capability of AIC and CO-MIMO.

\section{Conclusion}
We have proposed, for a 2-hop cellular uplink network, a link adaptive scheme where nodes have the flexibility to transmit to multiple PoAs. Under our scheme, nodes either transmit to a single PoA or split their data streams between the base station and a relay, using CSIT and backhaul load condition. We demonstrate a significant improvement in the aggregate end-to-end data rate in the network due to the proposed adaptation scheme. Future work could consider coordination between the UEs to form cooperative MIMO links and the impact of an adaptive bandwidth allocation to the backhaul links. 

\section*{Acknowledgement}
We would like to thank Dr Ravi Tandon for constructive feedback. This work is partially supported by the Science Foundation Ireland under Grant No. 10/IN.1/I3007.
\appendix
\subsubsection*{Proof of Proposition~\ref{a1}}
Consider the peak RS-RS transmission data rate $\tau^{(rr)}_i$ for UE $i$:
\begin{eqnarray*} 
\begin{split}
\underbrace{\max}_{P^{(a)}_{i}, P^{(d)}_{i}} W_a\log_2\left(1+\frac{P^{(a)}_{i}}{E^{(r)}_{i}}\right)+W_d\log_2\left(1+\frac{P^{(d)}_{i}}{E^{(d)}_{i}}\right)\\
P^{(a)}_{i}+P^{(d)}_{i}\leq P_{\max,i}\hspace*{1in}\\
P^{(a)}_{i}, P^{(d)}_{i} \geq 0 \hspace*{1.3in}\label{localOPT1}
\end{split}
\end{eqnarray*}
Using Lagrangian multipliers and KKT conditions we obtain:
\begin{eqnarray*} \small
\begin{split}
&\underbrace{\max}_{P^{(d)}_{i}, P^{(a)}_{i}} W_d\log_2\left(1+\frac{P^{(d)}_{i}}{E^{(d)}_{i}}\right) + W_a\log_2\left(1+\frac{P^{(a)}_{i}}{E^{(a)}_{i}}\right)\\
&-\lambda_1(P^{(d)}_{i}+P^{(a)}_{i}- P_{\max,i})-\lambda_2 P^{(d)}_{i} -\lambda_3 P^{(d)}_{i}\\
&\frac{W_d}{\ln(2)}.\frac{1}{P^{(d)}_{i}+E^{(d)}_{i}}-\lambda_1-\lambda_2 =0\\
&\frac{W_a}{\ln(2)}.\frac{1}{P^{(a)}_{i}+E^{(a)}_{i}}-\lambda_1-\lambda_3 =0\\
&\lambda_1(P^{(d)}_{i}+P^{(a)}_{i}- P_{\max,i}) =0\\
&\lambda_2 P^{(d)}_{i}=0\\
&\lambda_3 P^{(a)}_{i}=0
\end{split}
\end{eqnarray*}
Case 1: With $\lambda_2=\lambda_3=0, \lambda_1 \neq 0$, we obtain:
\begin{eqnarray*} 
\begin{split}
&\frac{W_d}{\ln(2)}.\frac{1}{P^{(d)}_{i}+E^{(d)}_{i}}-\frac{W_a}{\ln(2)}.\frac{1}{P_{\max,i}-P^{(d)}_{i}+E^{(a)}_{i}}=0\\
& P^{(d)}_{i} = \frac{W_d P_{\max,i}+ W_d E^{(a)}_{i}- W_a E^{(d)}_{i}}{W_d+W_a}\\
& P^{(a)}_{i} ={P_{\max,i}-P^{(d)}_i}\\
& =\frac{W_d P_{\max,i}-W_d E^{(a)}_{i}+ W_a E^{(d)}_{i})}{W_d+W_a}
\end{split}
\end{eqnarray*}
Case 2: With $\lambda_3 \neq 0, \lambda_2 = 0$ (i.e. $P^{(d)}_{i}=0$) we obtain:
\begin{eqnarray*} 
\begin{split}
\lambda_1=\frac{W_d}{\ln(2)}.\frac{1}{P^{(d)}_{i}+E^{(d)}_{i}}\\
P^{(d)}_{i}=0\\
P^{(a)}_{i}+0=P_{\max,i}
\end{split}
\end{eqnarray*}
We thus have $P^{(d)}_{i} = \min\left(P_{\max,i},\frac{\left(W_dP_{\max,i}-W_a E^{(d)}_{i}+ W_d E^{(r)}_{i}\right)^+}{W_a+W_d}\right)$.
The peak BS-RS or BS-BS transmission data rates $\tau^{(br)}_i$ and $\tau^{(bb)}_i$ have the same allocations as their  maximization is identical.

\bibliography{References}

\end{document}